\documentclass{article}
\usepackage{iclr2027_conference,times}

\usepackage{amsmath,amssymb,amsthm}
\usepackage{graphicx,booktabs,array,tabularx,needspace}
\usepackage{xcolor}
\usepackage{mdframed}
\usepackage{hyperref,url}
\hypersetup{
    colorlinks=false,
    pdfborder={0 0 0},
    pdftitle={What Response Marginals Miss: Adaptive Query Complexity of Functional Backdoor Recovery},
    pdfauthor={Yunjae Hwang and Byoungjin Seok}
}
\theoremstyle{definition}
\newtheorem{definition}{Definition}
\newtheorem{theorem}{Theorem}
\newtheorem{proposition}{Proposition}
\newtheorem{corollary}{Corollary}
\newtheorem{lemma}{Lemma}

\mdfsetup{
    backgroundcolor=gray!16,
    hidealllines=true,
    innerleftmargin=4pt,
    innerrightmargin=4pt,
    innertopmargin=4pt,
    innerbottommargin=4pt,
    skipabove=6pt,
    skipbelow=6pt
}

\surroundwithmdframed{definition}
\surroundwithmdframed{theorem}
\surroundwithmdframed{proposition}
\surroundwithmdframed{corollary}
\surroundwithmdframed{lemma}

\newcommand{\ind}{\mathbf 1}
\newcommand{\HH}{\mathcal H}
\newcommand{\QQ}{\mathcal Q}
\newcommand{\ZZ}{\mathcal Z}

\newcommand{\FAST}{\mathrm F}
\newcommand{\SLOW}{\mathrm S}
\newcommand{\BIN}{\mathrm B}

\title{What Response Marginals Miss: Adaptive Query Complexity of Functional Backdoor Recovery}

\author{
Yunjae Hwang \\
School of Cybersecurity \\
Korea University \\
Seoul, Republic of Korea \\
\texttt{hyj019@korea.ac.kr} \\
\And
Byoungjin Seok \\
Department of Convergence Security \\
Hansung University \\
Seoul, Republic of Korea \\
\texttt{bjseok@hansung.ac.kr}
}

\iclrfinalcopy

\begin{document}
\maketitle
\maketitle
\lhead{}

\begin{abstract}
Functional backdoor recovery finds any trigger whose attack success rate is at least a given threshold rather than to recover the planted trigger. We study the minimum number of queries required for this task under label feedback which returns the predicted class label. We construct two finite families of victim models that have exactly the same attack success rate for every victim and trigger candidate. The distribution of returned labels for every query is also identical under a uniformly chosen victim. These families form an explicit counterexample that despite the matched quantities, their optimal adaptive query complexities are \(\Theta(\log H)\) and \(\Theta(H)\) where \(H\) is the number of possible victims. The difference arises because the same non-target labels are associated with different sets of victims, so successive queries eliminate possible victims at different rates. This separation disappears when the response is reduced to binary feedback, which reports only whether the target label is returned. The separation also persists for every fixed failure probability below one. Finally, we realize the same recovery problems with trained CIFAR-10 ResNet-18 classifiers and verify the predicted optimal query budgets. These results show that attack success rate and the distribution of returned labels for each query are insufficient to determine the query complexity of functional backdoor recovery.
\end{abstract}

\section{Introduction}
\label{sec:introduction}

Trigger inversion is a common approach to backdoor inspection in which a defender searches for transformations that activate the target behavior of a suspicious classifier
\citep{wang2019neural,liu2019abs,tao2022better,wang2022rethinking,wang2023unicorn}.
Black-box inspection has also been studied under limited access to model outputs
\citep{chen2019deepinspect,dong2021black,guo2022aeva,le2024restore,popovic2025debackdoor}.
For trigger inversion, recovering the exact planted trigger is not necessary: approximate or alternative transformations can activate the same malicious behavior
\citep{qiu2024belt,abad2026removing}.
This motivates functional backdoor recovery, where success requires any admissible transformation whose attack success rate (ASR) is at least a prescribed threshold.

The difficulty of functional recovery depends not only on the backdoor itself but also on the information available to the defender.
Black-box inspection methods study this problem under limited access to model outputs
\citep{chen2019deepinspect,dong2021black,guo2022aeva,le2024restore,popovic2025debackdoor},
while recent attacks have been designed to resist trigger inversion or related inspection methods
\citep{zhu2024gradient,qiu2024belt,cheng2024lotus}.
These studies analyze particular inspection methods or attacks designed to resist inversion rather than characterize the optimal adaptive query budget under a fixed query model.
Decision-based black-box attacks show that class labels alone can guide adaptive search
\citep{brendel2018decisionbased,chen2020hopskipjumpattack},
while adaptive testing and decision-region methods study how query outcomes reduce the remaining possibilities
\citep{cicalese2014diagnosis,javdani2014decision,8104072,chen2017correlated}.
These works do not address the controlled comparison considered here. We ask whether functional recovery problems with the same complete candidate success relation and the same response marginal for every query can nevertheless have different optimal adaptive query budgets.

To isolate this question, we compare victim families that agree on ASR for every victim and candidate and have the same response marginal for every query under a uniformly drawn hidden victim.
For deterministic responses, this also matches each query's response entropy and its initial mutual information with the hidden victim.
The marginals do not determine which victims produce each response and therefore do not determine how the remaining victims are partitioned after successive observations.
We show that this missing information matters by constructing an explicit counterexample in which the matched quantities are identical but the optimal adaptive query budgets are different.

The counterexample consists of two families called FAST and SLOW.
FAST queries reveal complementary information about the hidden victim, whereas SLOW contains a hard subfamily for which a consistent response path removes at most one remaining victim per query.
Consequently, the optimal recovery budgets grow logarithmically and linearly with the family size despite the matched success relation and response marginals.
The separation persists for every fixed failure probability below one but disappears when each response is reduced to binary feedback.
We also show that the exact finite recovery budgets predicted by the construction hold for trained CIFAR-10 ResNet-18 classifiers at the tested problem sizes.

\textbf{Our contributions.}
We establish exact query complexity results for functional backdoor recovery
under label feedback:

\begin{itemize}

\item
We give an explicit counterexample showing that the optimal query budget for functional backdoor recovery is not determined by the ASR of the trigger candidates or by theresponse marginal of each query. We construct two finite victim families that match both quantities exactly and show that their optimal zero-error recovery budgets become different as \(\Theta(\log H)\) and \(\Theta(H)\).

\item
We strengthen the separation in two directions: For every fixed failure probability \(\delta<1\), two families retain \(\Theta(\log H)\) and \(\Theta(H)\) query complexity, respectively. After label feedback is reduced to binary feedback, both families have the same exact optimal budget, showing that the separation is due to the information carried by non-target class labels rather than differences in which candidates succeed.

\item
We realize the finite recovery problems with trained convolutional classifiers and verify the conditions that transfer the theoretical query bounds to the learned models. The resulting families attain the exact predicted query budgets at \(H=12,24,48\) under the fixed-source query model.

\end{itemize}

Together, these results show that the complete ASR function and the response marginal of each query under a uniform hidden victim prior are insufficient to determine the optimal adaptive query budget. Which victim models produce each non-target label can change the number of queries required for recovery.

\section{Related Work}
\label{sec:related}

\paragraph{Trigger inversion and functional recovery.}
Backdoors associate a planted trigger with a target prediction while preserving normal behavior on clean inputs \citep{gu2017badnets,chen2017targeted}. Inversion methods reconstruct effective patterns through optimization, neuron stimulation, and constraints in input or feature space \citep{wang2019neural,liu2019abs,tao2022better,wang2022rethinking,wang2023unicorn,sun2023single,xu2024towards}. Black-box methods include DeepInspect, B3D, AEVA, REStore, and DeBackdoor, but they differ substantially in the model outputs available to the defender \citep{chen2019deepinspect,dong2021black,guo2022aeva,le2024restore,popovic2025debackdoor}. B3D assumes class-probability outputs, AEVA uses only the final predicted label, REStore includes a setting with label feedback, and DeBackdoor optimizes a probability-based surrogate of attack success. These works design particular detection or reconstruction procedures under restricted model access and analyze particular recovery or detection procedures. We instead study the minimum worst case number of queries under label feedback required by any adaptive procedure to return a candidate whose ASR is at least a prescribed threshold.

\paragraph{Inversion resistance and fundamental limits.}
Recent work has shown that the effectiveness of the planted trigger and the difficulty of recovering a working trigger need not move together. GRASP reduces the effective radius of the planted trigger and makes gradient-based inversion more difficult without substantially weakening the backdoor itself, while BELT suppresses fuzzy variants of the trigger so that the planted trigger remains effective but nearby alternatives are less likely to activate the target behavior \citep{zhu2024gradient,qiu2024belt}. These results motivate asking how difficult it is to recover any trigger that still activates the backdoor rather than judging difficulty from the planted trigger alone. At the same time, alternative-trigger analysis shows that patterns different from the planted trigger can activate the same backdoor \citep{abad2026removing}. This supports treating recovery of the malicious behavior separately from exact recovery of the planted trigger. Cryptographic constructions go further by showing that backdoors can be made computationally undetectable or non-replicable for efficient adversaries \citep{goldwasser2022planting,kalavasis2024injecting}. Our question is different: we place no restriction on computation and ask how many model predictions are required to recover a trigger that meets a given success threshold.

\paragraph{Query complexity and decision regions.}
Gluch and Urbanke treat the number of black-box queries as the main resource and ask what a computationally unbounded attacker can achieve with limited model access \citep{gluch2021query}. Work on adaptive testing further shows that the hidden hypothesis need not be identified once one decision is valid for all hypotheses still consistent with the observations \citep{golovin2010near,javdani2014decision}. This directly matches functional recovery where it is enough to find one candidate that succeeds for every victim still consistent with the returned labels. Prior work also shows that greedily reducing uncertainty can be far from optimal when dependencies among tests matter \citep{chen2017correlated}. Blackwell's comparison of experiments instead considers the distribution of observations for each possible hidden state \citep{blackwell1953equivalent}, which is stronger than matching the distribution obtained after averaging over one prior. We show that even when every candidate has the same ASR for every victim and every query returns labels with the same distribution under a uniform victim, the optimal adaptive query budget can still differ.

\section{Functional Backdoor Recovery}\label{sec:problem}

\subsection{Functional Recovery and Query Budget}\label{sec:Functional Recovery and Query Budget}

Consider a classifier \(f:\mathcal X\to\mathcal Y\) and a known target label \(t\in\mathcal Y\). Each candidate \(z\in\mathcal Z\) specifies a transformation \(T_z:\mathcal X\to\mathcal X\). We assume that \(\mathcal Z\) and the transformations \(\{T_z:z\in\mathcal Z\}\) are public. Source inputs are drawn from a distribution \(\mathcal D_{\neg t}\) over inputs whose ground-truth labels are not \(t\).

\begin{definition}[\textbf{Functional recovery}]
For \(\tau\in(0,1]\), define
\begin{equation}
g_f(z)
=
\Pr_{X\sim\mathcal D_{\neg t}}
\!\left[f(T_z(X))=t\right],
\qquad
\mathcal F_\tau(f)
=
\{z\in\mathcal Z:g_f(z)\ge\tau\}.
\label{eq:functional_set}
\end{equation}
A recovery procedure succeeds if its output belongs to
\(\mathcal F_\tau(f)\).
\end{definition}
Thus recovery does not require the planted trigger; any
\(z\in\mathcal F_\tau(f)\) is a functional trigger candidate and a valid output.

To measure the query complexity of functional recovery, we consider black-box
queries under label feedback. A query model \(\mathcal I\) specifies how the source input in each query is obtained. Depending on \(\mathcal I\), the source may be fixed, sampled from \(\mathcal D_{\neg t}\), or chosen by the algorithm.  Given a source input \(x\) and a trigger
candidate \(z\), a query returns the predicted class label \(f(T_z(x))\) under label feedback. Under binary feedback, it returns \(\mathbf 1\{f(T_z(x))=t\}\). Each model evaluation
counts as one query, while computation between queries is unrestricted; the
returned trigger candidate need not have been queried.


Let \(\mathcal H\) be an index set and
\(\{f_h\}_{h\in\mathcal H}\) a known family of possible victim models.
The algorithm knows this family but not the actual index \(h\).
For each \(h\in\mathcal H\), let $F(h)=\mathcal F_\tau(f_h)$
denote the set of trigger candidates that satisfy the functional recovery
criterion for \(f_h\). We measure how many queries are sufficient to return such a trigger
candidate for every possible victim model. For an algorithm
\(\mathcal A\) operating under \(\mathcal I\), let
\(T_{\mathcal A}\) denote the number of queries made before it returns
\(Z_{\mathcal A}\). If the algorithm does not return, we set
\(T_{\mathcal A}=\infty\) and \(Z_{\mathcal A}=\bot\notin\mathcal Z\).

\begin{definition}[\textbf{Recovery budget}]
For \(\delta\in[0,1)\), define
\begin{equation}
Q_{\mathcal A}(F,\delta)
=
\inf\left\{
q\in\mathbb N_0:
\Pr_h\!\left(
T_{\mathcal A}\le q,\,
Z_{\mathcal A}\in F(h)
\right)
\ge 1-\delta
\quad
\forall h\in\mathcal H
\right\},
\label{eq:algorithm_budget}
\end{equation}
where for each fixed \(h\), the probability is over the randomness of
\(\mathcal A\) and any random source sampling specified by \(\mathcal I\).
The optimal recovery budget is
\begin{equation}
Q^\star_{\mathcal I}(F,\delta)
=
\inf_{\mathcal A}Q_{\mathcal A}(F,\delta),
\label{eq:optimal_budget}
\end{equation}
where the infimum is over all algorithms allowed by \(\mathcal I\).
\label{def:budget}
\end{definition}
Thus, \(Q_{\mathcal A}(F,\delta)\) is the smallest query budget under which
\(\mathcal A\) returns a trigger candidate satisfying the functional recovery
criterion with probability at least \(1-\delta\) for every possible victim
model. The quantity \(Q^\star_{\mathcal I}(F,\delta)\) is the minimum
such budget over all algorithms permitted by the query model.

\subsection{Independent Sampling Baseline}
\label{sec:independent_sampling}

The optimal recovery budget in Section~\ref{sec:Functional Recovery and Query Budget} allows the algorithm to choose
queries adaptively based on previously returned labels. As a simple baseline,
consider a procedure that samples trigger candidates independently
from a fixed distribution \(Q_Z\) over \(\mathcal Z\) and evaluates each
sampled candidate on fresh source inputs.

\begin{proposition}[\textbf{IID recovery baseline}]
\label{prop:iid_recovery}

Fix a victim model \(f\) and a distribution \(Q_Z\) over
\(\mathcal Z\).Let $\rho
=
\Pr_{Z\sim Q_Z}[g_f(Z)\ge\tau]
>0,
$
and suppose that for some \(\eta>0\),
$
\Pr_{Z\sim Q_Z}
\!\left(
|g_f(Z)-\tau|\ge\eta
\right)
=1.
$
Let \(Z_1,\ldots,Z_L\stackrel{\mathrm{iid}}{\sim}Q_Z\), and independently,
for each \(i\in[L]\), let
\(X_{i,1},\ldots,X_{i,r}\stackrel{\mathrm{iid}}{\sim}\mathcal D_{\neg t}\).
Define
$
\widehat g_i
=
\frac{1}{r}
\sum_{j=1}^{r}
\mathbf 1\!\left\{
f(T_{Z_i}(X_{i,j}))=t
\right\}.
$
If the procedure returns any \(Z_i\) with \(\widehat g_i\ge\tau\), then
\begin{equation}
\Pr[\mathrm{failure}]
\le
(1-\rho)^L+2Le^{-2r\eta^2}.
\label{eq:iid_failure_bound}
\end{equation}
Hence, for any \(\delta\in(0,1)\),
\begin{equation}
L=
\left\lceil
\rho^{-1}\log\frac{2}{\delta}
\right\rceil,
\qquad
r=
\left\lceil
\frac{1}{2\eta^2}
\log\frac{4L}{\delta}
\right\rceil
\label{eq:iid_budget}
\end{equation}
suffice to return a trigger candidate in \(\mathcal F_\tau(f)\) with
probability at least \(1-\delta\), using \(Lr\) model queries.
\end{proposition}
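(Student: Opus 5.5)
We would prove the failure bound \eqref{eq:iid_failure_bound} with a union bound over two bad events, and then plug in the parameters \eqref{eq:iid_budget}.

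\textbf{Event $E_1$: no sampled candidate is good.} Let $E_1$ be the event that no $Z_i$ lies in $\mathcal F_\tau(f)$. Since the $Z_i$ are i.i.d.\ from $Q_Z$ and each is good with probability $\rho$, we have $\Pr[E_1]=(1-\rho)^L$.

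\textbf{Event $E_2$: some estimate is far from its mean.} Let $E_2$ be the event that $|\widehat g_i-g_f(Z_i)|\ge\eta$ for some $i\in[L]$. Condition on $Z_i=z$. The indicators $\mathbf 1\{f(T_z(X_{i,j}))=t\}$ are i.i.d.\ Bernoulli with mean $g_f(z)$, because the sources are independent of the candidates. Two-sided Hoeffding gives $\Pr[|\widehat g_i-g_f(z)|\ge\eta]\le 2e^{-2r\eta^2}$. Integrating over $z$ and taking a union over the $L$ indices gives $\Pr[E_2]\le 2Le^{-2r\eta^2}$.

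\textbf{Failure implies $E_1\cup E_2$.} Suppose $E_2$ fails, so $|\widehat g_i-g_f(Z_i)|<\eta$ for every $i$. The margin assumption says that almost surely $g_f(Z_i)\ge\tau+\eta$ or $g_f(Z_i)\le\tau-\eta$.
\begin{itemize}
\item If $Z_i$ is bad, then $g_f(Z_i)\le\tau-\eta$, so $\widehat g_i<\tau$ and $Z_i$ is never returned.
\item If $Z_i$ is good, then $g_f(Z_i)\ge\tau+\eta$, so $\widehat g_i>\tau$ and $Z_i$ passes the test.
\end{itemize}
Now suppose $E_1$ also fails, so some $Z_i$ is good. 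Then at least one candidate passes, and every candidate that passes is good. Hence whichever passing $Z_i$ the procedure returns lies in $\mathcal F_\tau(f)$.

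Failure means either nothing is returned or a bad candidate is returned. By the above, both require $E_1\cup E_2$. The union bound then gives \eqref{eq:iid_failure_bound}.

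\textbf{Parameter choice.} Use $(1-\rho)^L\le e^{-\rho L}$. With $L\ge\rho^{-1}\log(2/\delta)$ this term is at most $\delta/2$. With $r\ge(2\eta^2)^{-1}\log(4L/\delta)$ we get $2Le^{-2r\eta^2}\le 2L\cdot\delta/(4L)=\delta/2$. The total failure probability is therefore at most $\delta$. The procedure makes exactly $Lr$ model queries.

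\textbf{Main subtlety.} The delicate step is showing that failure implies $E_1\cup E_2$. Both directions of the Hoeffding deviation are needed, and they are used on every sampled candidate simultaneously. This is why the factor is $2L$ and why the $\eta$-margin assumption is essential; without it, candidates with $g_f$ near $\tau$ could be misclassified. A minor point is that ties $\widehat g_i=\tau$ are handled by the strict inequalities above. Measurability of conditioning on $Z_i$ is routine.
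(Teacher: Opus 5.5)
Your proof is correct and follows essentially the same route as the paper: the same two bad events (no good candidate is sampled; some estimate deviates from its ASR by at least $\eta$), the $\eta$-margin argument that outside these events the threshold test classifies every sampled candidate correctly, a union bound, and the same parameter substitution via $(1-\rho)^L\le e^{-\rho L}$. The only cosmetic difference is that you aggregate the $L$ deviation events into a single $E_2$ where the paper keeps them as separate events $E_i$.
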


Appendix~\ref{app:iid} proves the bound. The first term in~\eqref{eq:iid_failure_bound} is the probability that none
of the \(L\) sampled trigger candidates satisfies the recovery criterion.
The second bounds errors in deciding whether the sampled candidates lie above
or below the threshold \(\tau\) from \(r\) source evaluations. Consequently,
\(\rho\) determines how many trigger candidates must be sampled, and
\(\eta\) determines how many source inputs are needed to evaluate each one
reliably.

Even though Proposition~\ref{prop:iid_recovery} gives a sufficient query budget for this independent sampling procedure, it does not characterize the optimal adaptive budget \(Q^\star_{\mathcal I}(F,\delta)\). In particular, two recovery
problems can agree on which trigger candidates satisfy the recovery criterion
while differing in how much information their returned labels provide to an
adaptive algorithm. The next section gives an explicit counterexample in which both ASR and the response marginal of each query are matched, yet the optimal adaptive query budgets differ.

\section{Exact Separation under Matched Marginals}
\label{sec:separation}

\subsection{Construction and Exact Query Budgets}
\label{sec:construction_exact_costs}

We construct two finite recovery problems, FAST and SLOW, as an explicit
counterexample showing that ASR and the marginal distribution of
each query response under a uniformly chosen victim do not determine the optimal query complexity. The two
problems match these quantities exactly but differ in which victims produce
the non-target labels.

Set $N=2^n, m=N/2, H=N+m=3\cdot2^{n-1}$ for \(n\ge2\).
There are \(N\) core victims \(c_0,\ldots,c_{N-1}\) and \(m\) auxiliary
victims \(d_0,\ldots,d_{m-1}\) with $\mathcal H
=
\{c_0,\ldots,c_{N-1},d_0,\ldots,d_{m-1}\}.$
We use a singleton source distribution, so
the source input is fixed and omitted from the notation.
Queries and output trigger candidates use the same set,
\[
\mathcal Q=\mathcal Z
=
\{p_0,\ldots,p_{m-1},v_0,\ldots,v_{N-1}\}.
\]
The candidates \(v_z\) are paired with the core victims \(c_z\), while
\(p_i\) are paired with the auxiliary victims \(d_i\).
FAST and SLOW differ only in how the non-target responses of the \(p_i\) queries divide the core victims.

We use labels \(\{0,1,2,3,4\}\), with target label \(0\). Let
\(O_h^{\mathrm F}(q)\) and \(O_h^{\mathrm S}(q)\) denote the returned labels.
For \(z\in\{0,\ldots,N-1\}\), let \(b_k(z)\) be its \(k\)-th bit, and for
\(p_i\) let \(k=i\bmod n\). On the core victims, FAST returns
\begin{equation}
O^{\mathrm F}_{c_z}(p_i)=1+b_k(z),
\label{eq:fast_core}
\end{equation}
so each \(p_i\) divides the core victims according to one bit of \(z\).
Here \(1+b_k(z)\in\{1,2\}\), so these responses are always non-target.
For SLOW, writing the core victims as
\(c_0,\ldots,c_{m-1}\) and \(c_m,\ldots,c_{2m-1}\),
\begin{equation}
O^{\mathrm S}_{c_j}(p_i)
=
\begin{cases}
1, & j=i,\\
2, & j\neq i,
\end{cases}
\qquad
O^{\mathrm S}_{c_{m+j}}(p_i)
=
\begin{cases}
2, & j=i,\\
1, & j\neq i.
\end{cases}
\label{eq:slow_core}
\end{equation}
Hence \(p_i\) isolates one victim in the first half. The second half is
arranged to preserve the same label counts as FAST; across all
\(N\) core victims, labels \(1\) and \(2\) each occur \(m\) times.
All other responses are identical:
\begin{equation}
O^{\mathrm F}_{d_j}(p_i)
=
O^{\mathrm S}_{d_j}(p_i)
=
\begin{cases}
0, & j=i,\\
3+b_k(j), & j\neq i,
\end{cases}
\qquad
O^{\mathrm F}_{h}(v_z)
=
O^{\mathrm S}_{h}(v_z)
=
\begin{cases}
0, & h=c_z,\\
1, & h\neq c_z.
\end{cases}
\label{eq:common_responses}
\end{equation}
\(p_i\) is successful only for \(d_i\), while \(v_z\) is successful
only for \(c_z\). Writing \(\phi(c_z)=v_z\) and \(\phi(d_i)=p_i\), both
families satisfy \(F(h)=\{\phi(h)\}\) with \(\phi\) a bijection. Since the remaining responses are identical in FAST and SLOW, the matched core label counts imply the same marginal distribution for every query.

Now, let \(Q_{\mathrm F}^\star(\delta)\) and
\(Q_{\mathrm S}^\star(\delta)\) denote the optimal recovery budgets for FAST
and SLOW under label feedback. Let \(Q_{\mathrm B}^\star(\delta)\) denote
the corresponding budget when each returned label \(y\) is replaced by the
target indicator \(\mathbf 1\{y=0\}\). 

\begin{theorem}[\textbf{Query complexity separation}]
\label{thm:exact_separation}
For every victim \(h\) and trigger candidate \(q\),
\[
g_h^{\mathrm F}(q)
=
g_h^{\mathrm S}(q)
=
\mathbf 1\{q=\phi(h)\}.
\]
If \(U\sim\operatorname{Unif}(\mathcal H)\), then for every
\(q\in\mathcal Q\) and \(y\in\{0,1,2,3,4\}\),
\begin{equation}
\Pr[O_U^{\mathrm F}(q)=y]
=
\Pr[O_U^{\mathrm S}(q)=y].
\label{eq:matched_marginals}
\end{equation}
Their optimal zero-error budgets under label feedback are
\begin{equation}
Q_{\mathrm F}^\star(0)=n
\quad (n\ge2),
\qquad
Q_{\mathrm S}^\star(0)=2^{n-1}-1
\quad (n\ge3).
\label{eq:exact_full_label_costs}
\end{equation}
For \(n=2\), \(Q_{\mathrm S}^\star(0)=2\). Under binary feedback,
both families satisfy
$Q_{\mathrm B}^\star(0)=H-1.$

\end{theorem}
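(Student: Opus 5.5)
The plan is to dispose of the matched-quantity claims by direct counting, then reduce zero-error recovery to exact identification of $h$ via a deterministic decision tree, and finally prove matching upper and lower bounds on tree depth family by family.

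\textbf{Matched quantities.} Since the source distribution is a singleton, $g_h(q)=\mathbf 1\{O_h(q)=0\}$. By \eqref{eq:fast_core}--\eqref{eq:common_responses}, label $0$ occurs only for $(d_i,p_i)$ and $(c_z,v_z)$. This gives $g^{\mathrm F}_h=g^{\mathrm S}_h=\mathbf 1\{q=\phi(h)\}$. For \eqref{eq:matched_marginals}, the responses to every $v_z$ and the responses of every $d_j$ are common to both families, so only the core responses to $p_i$ need checking.
\begin{itemize}
\item In FAST, $b_k(z)$ splits the $N=2^n$ values of $z$ into $m$ ones and $m$ zeros.
\item In SLOW, the first half contributes one $1$ and $m-1$ copies of $2$, and the second half contributes one $2$ and $m-1$ copies of $1$.
\end{itemize}
So in both families labels $1$ and $2$ each occur $m$ times.

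\textbf{Reduction to decision trees.} Since $F(h)=\{\phi(h)\}$ with $\phi$ a bijection, a zero-error algorithm must output $\phi(h)$ exactly. With $\delta=0$, every realization of the algorithm's internal randomness must succeed within $q$ queries for every $h$, so fixing any single realization gives a deterministic algorithm with the same budget. Hence $Q^\star(0)$ is the minimum depth of an adaptive query tree in which every leaf's consistent set of victims is a singleton. (Two distinct consistent victims need different outputs, and the output need not have been queried.)

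\textbf{FAST.}
\begin{itemize}
\item Upper bound: query $p_0,\ldots,p_{n-1}$, which is possible since $m=2^{n-1}\ge n$. A core victim $c_z$ returns labels in $\{1,2\}$ encoding all $n$ bits of $z$. An auxiliary $d_j$ returns $0$ at $p_j$ if $j<n$. Otherwise it returns labels $3+b_k(j)$ that encode the $n-1$ relevant bits of $j$, and this fully determines $j$ even when one position is replaced by $0$ (then $j$ is that index). Labels $\{1,2\}$ versus $\{0,3,4\}$ also separate core from auxiliary victims.
\item Lower bound: on the core victims, every $p_i$ takes at most two values ($\{1,2\}$) and every $v_z$ takes at most two values ($\{0,1\}$). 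So a depth-$q$ tree partitions the $2^n$ core victims into at most $2^q$ classes, forcing $q\ge n$.
\end{itemize}
The same core argument gives the lower bound $2$ for SLOW at $n=2$. For the matching upper bound there, I would exhibit an explicit two-query tree on the six victims.

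\textbf{SLOW, $n\ge3$.}
\begin{itemize}
\item Upper bound: query $p_0,\ldots,p_{m-2}$. On the core, the patterns are as follows:
  \begin{itemize}
  \item first-half $c_j$ returns ``$1$ at $j$, $2$ elsewhere'', and $c_{m-1}$ returns all $2$s;
  \item second-half $c_{m+j}$ returns ``$2$ at $j$, $1$ elsewhere'', and $c_{2m-1}$ returns all $1$s.
  \end{itemize}
  First-half patterns contain at most one $1$ and second-half patterns at most one $2$, so with length $m-1\ge3$ all $N$ patterns are distinct. The auxiliary victims are separated as in FAST, because $m-1\ge n$ lets the indices $i\bmod n$ cover all bit positions.
\item Lower bound: an adversary confined to the subfamily $\{c_0,\ldots,c_{m-1}\}$ answers $2$ to every $p_i$ and $1$ to every $v_z$. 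Each answer is consistent with all of this subfamily except at most one victim ($c_i$, resp.\ $c_z$). After fewer than $m-1$ queries, at least two first-half victims with distinct $\phi$-values remain consistent, so no output can be correct.
\end{itemize}

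\textbf{Binary feedback.} Every query returns $1$ on exactly one victim, namely $\phi^{-1}(q)$. The same always-answer-$0$ adversary argument therefore forces $H-1$ queries. Conversely, querying $H-1$ distinct candidates identifies $h$: either the unique $1$ pins it down, or $h$ is the one victim whose paired candidate was not queried.

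The step I expect to need the most care is the SLOW upper bound: verifying that $m-1$ nonadaptive queries separate all $H$ victims simultaneously, including core versus auxiliary and the boundary case $n=3$.
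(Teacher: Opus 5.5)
Your proof is correct. It matches the paper's proof (Appendix~\ref{app:exact_separation}) in the ASR and marginal counts, the reduction to deterministic trees, both FAST bounds, the adversary on $\{c_0,\ldots,c_{m-1}\}$, the core-counting lower bound at $n=2$, and the binary equality-test argument. The one genuine difference is the SLOW upper bound for $n\ge3$.

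The paper argues adaptively. It queries $p_0$ and decodes auxiliary victims with the FAST bit representation. After a core response it queries $p_1$, which leaves one of $\{a_0,b_1\}$, $\{b_0,a_1\}$, $\{a_j\}_{j\ge2}$ or $\{b_j\}_{j\ge2}$. It then eliminates with $v_z$ queries, for at most $2+(m-3)=m-1$ queries in total.

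Your non-adaptive batch $p_0,\ldots,p_{m-2}$ also attains $m-1$. First-half core patterns contain at most one label $1$, while second-half patterns contain at least $m-2\ge2$ of them, so the core patterns are distinct. The auxiliary victims are separated even more easily than your bit-coverage argument suggests: $d_j$ for $j\le m-2$ is the only victim returning $0$ at $p_j$, and $d_{m-1}$ is the only auxiliary victim with no $0$.

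What each route buys: yours shows that adaptivity gives no advantage for SLOW, since a fixed query list already meets the lower bound. The paper's route shows that after two $p$-queries the residual core sets are pure equality-test instances, so the match with the adversary lower bound is visible.

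Two small repairs are needed.
\begin{itemize}
\item At $\delta=0$ it is not literally true that every realization of the internal randomness succeeds, because a null set may fail. Pick instead a realization in the intersection of the $|\mathcal H|$ probability-one events, as Lemma~\ref{lemma_2} does.
\item The $n=2$ upper bound you defer is immediate. Query $p_0$: it returns $0$ on $d_0$, $4$ on $d_1$, $1$ on $\{c_0,c_3\}$ and $2$ on $\{c_1,c_2\}$. One $v_z$ query then resolves each remaining pair.
\end{itemize}
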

Since \(H=3\cdot2^{n-1}\), for \(n\ge3\), $Q_{\mathrm F}^\star(0)
=
\log_2(2H/3)$
and
$Q_{\mathrm S}^\star(0)
=
H/3-1.
$
The two budgets under label feedback coincide at \(n=2,3\), and a strict separation
begins at \(n=4\). The complete proof is deferred to Appendix~\ref{app:exact_separation}.

\subsection{Limits of ASR and Response Marginals}
\label{sec:limits_matched_quantities}

The matched response marginals specify how many victims return each
label but not which victims return them. Figure~\ref{fig:response_tables}
shows this distinction at \(H=24\). In general, zero-error functional
recovery need not identify the victim if all remaining victims share a
functional trigger candidate. In the FAST and SLOW construction, however,
each victim has a different unique functional trigger candidate, so
recovery requires identifying the victim. Appendix~\ref{app:decision}
gives the general zero-error stopping condition.

\begin{figure}[t]
 \centering
 \includegraphics[width=.91\linewidth]{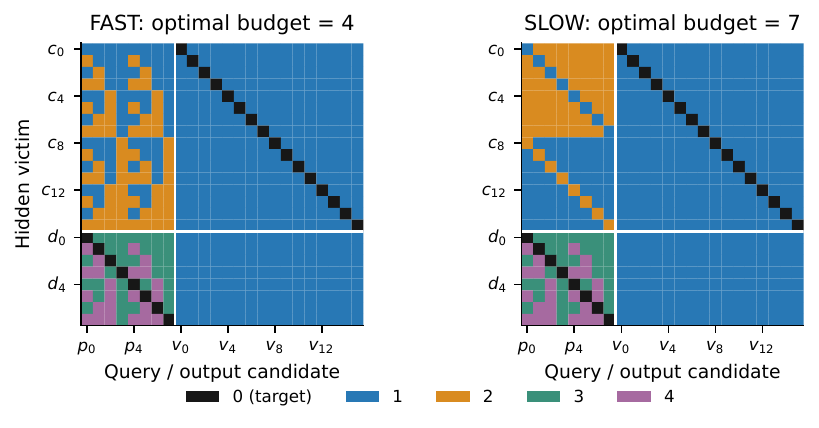}
 \caption{The \(H=24\) response tables. Rows are hidden victims and
 columns are queries, which are also output candidates. Each row and
 column has one target entry. Every column has the same label counts in
 both tables, but the association between responses and victims differs.
 The optimal zero-error recovery budgets are \(4\) and \(7\), respectively.}
 \label{fig:response_tables}
\end{figure}

In FAST, the \(p_i\) queries give complementary binary partitions of the
\(N=2^n\) core victims. Every query has at most two responses on this set,
so \(q\) queries can distinguish at most \(2^q\) cases. Since different
core victims require different functional trigger candidates,
\(2^q\ge N\) which implies \(q\ge n\). The queries
\(p_0,\ldots,p_{n-1}\) attain this bound by revealing the \(n\) bits of
the core index. In SLOW, every query is either constant or isolates at
most one victim from \(\{c_0,\ldots,c_{m-1}\}\). A worst case response
therefore eliminates at most one remaining victim per query, which
requires at least \(m-1\) queries.

Under binary feedback, the identities of the non-target labels are
discarded. A query exactly returns \(1\) when the queried trigger
candidate is the unique functional trigger candidate for the actual
victim. Every query therefore becomes an equality test and both
families require exactly \(H-1\) queries in the worst case. More
generally, label feedback cannot have a larger optimal recovery budget
than binary feedback because an algorithm may discard the class identity
of each response. Appendix~\ref{app:feedback_simulation} gives the general
simulation statement.

Because the responses are deterministic given the victim, matching the
response marginal of every fixed query also matches its response entropy
and its initial mutual information with the hidden victim. Nevertheless,
Theorem~\ref{thm:exact_separation} gives logarithmic and linear optimal
zero-error recovery budgets. Figure~\ref{fig:exact_budgets} shows how the corresponding zero-error
budgets scale with the number of possible victims.
We also give the formal
comparison under the uniform distribution over victims in Appendix~\ref{app:marginal}.

\begin{corollary}
\label{cor:Unbounded Query-Complexity Gap}
The FAST and SLOW constructions have the same ASR for every
victim--trigger candidate pair and the same response marginal for every
query, while
\[
\frac{Q_{\SLOW}^\star(0)}
     {Q_{\FAST}^\star(0)}
=
\frac{H/3-1}{\log_2(2H/3)}
\longrightarrow\infty.
\]
Consequently, these quantities do not determine the optimal zero-error
recovery budget and cannot approximate it within any fixed multiplicative
factor.
\end{corollary}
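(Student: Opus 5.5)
The corollary follows almost immediately from Theorem~\ref{thm:exact_separation}. The plan is to first read off the matched quantities, then substitute the exact budgets and take a limit, and finally turn the divergence of the ratio into the non-determinacy and inapproximability statement.

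\textbf{Matched quantities.} The first part of Theorem~\ref{thm:exact_separation} gives \(g^{\FAST}_h(q)=g^{\SLOW}_h(q)=\mathbf 1\{q=\phi(h)\}\) for every pair \((h,q)\). Its second part, equation~\eqref{eq:matched_marginals}, gives equality of the label distribution of every query under \(U\sim\operatorname{Unif}(\mathcal H)\). So both matching claims are direct citations and need no new argument.

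\textbf{Ratio.} For \(n\ge3\), the theorem gives \(Q_{\FAST}^\star(0)=n\) and \(Q_{\SLOW}^\star(0)=2^{n-1}-1\). Using \(H=3\cdot 2^{n-1}\), we have \(n=\log_2(2H/3)\) and \(2^{n-1}-1=H/3-1\), which gives the displayed ratio. As \(n\to\infty\) along \(H=3\cdot2^{n-1}\), the ratio is \((2^{n-1}-1)/n\). This diverges because \(2^{n-1}\ge n^2\) for \(n\ge 7\), say, so the ratio is at least \(n-1/n\to\infty\). The only point to watch is that the limit is taken along the admissible sequence of \(H\) values, since \(H\) is not an arbitrary integer. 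I will state this explicitly.

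\textbf{Consequence.} Suppose some functional \(\Psi\) of (ASR table, per-query response marginals) determined \(Q^\star(0)\). Then FAST and SLOW at the same \(n\) would receive equal values. This contradicts \(Q_{\SLOW}^\star(0)>Q_{\FAST}^\star(0)\), which already holds at \(n=4\), where the budgets are \(7\) and \(4\).

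For the approximation claim, suppose \(\Psi\) approximated \(Q^\star(0)\) within a fixed factor \(C\ge1\), meaning \(\Psi/C\le Q^\star(0)\le C\Psi\) for every instance. Applying this to both families at the same \(n\), which share the same \(\Psi\), gives
\[
Q_{\SLOW}^\star(0)\le C\Psi\le C^2 Q_{\FAST}^\star(0).
\]
Hence the ratio would be bounded by \(C^2\), which contradicts its divergence once \(n\) is large enough.

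I expect no real obstacle here, since all the substance lives in Theorem~\ref{thm:exact_separation}. The only care needed is in phrasing what ``determine'' and ``approximate within a fixed factor'' mean formally, namely as any function of the matched data.
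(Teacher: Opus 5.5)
Your proposal is correct and follows essentially the same route as the paper. It cites Theorem~\ref{thm:exact_separation} for the matched quantities and exact budgets, then observes that any function of the matched data gives FAST and SLOW a common value, so a factor-$c$ approximation would force $Q_{\SLOW}^\star(0)/Q_{\FAST}^\star(0)\le c^2$, contradicting the divergence of the ratio. Your separate treatment of exact determination at $n=4$ and your explicit check of the limit along $H=3\cdot2^{n-1}$ are minor additions the paper leaves implicit; exact determination is just the case $c=1$.
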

The proof can be found in Appendix~\ref{app:proof_of_corollary_1}.

\begin{figure}[t]
    \centering
    \includegraphics[width=.72\linewidth]{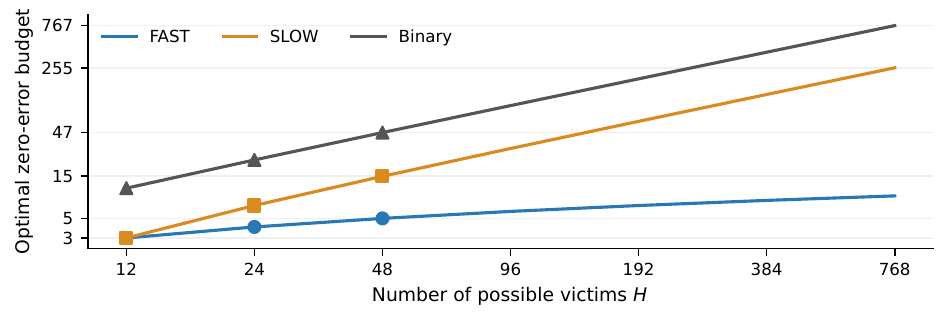}
    \caption{Exact theoretical zero-error recovery budgets as the number
    of possible victims increases. Filled markers indicate the three
    family sizes realized with trained models in Section~\ref{sec:neural};
    values at larger sizes are theoretical only. Under label feedback,
    the FAST and SLOW budgets coincide at \(H=12\) and separate from
    \(H=24\) onward. Under binary feedback, the exact budget is \(H-1\).}
    \label{fig:exact_budgets}
\end{figure}

Note that in the construction above, each
victim has a different unique functional trigger candidate, so recovery
requires identifying the victim. This identification property is not
necessary for the separation.

\begin{proposition}[\textbf{Separation without identification}]
\label{prop:overlapping_success}
For \(n\ge3\), the common responses in
Equation~\eqref{eq:common_responses} can be modified so that, for every
\(0<\tau\le1\), FAST and SLOW have the same functional success sets
\[
\widetilde F(c_z)=\{v_z,v_{(z-1)\bmod N}\},
\qquad
\widetilde F(d_j)=\{p_j,p_{(j-1)\bmod m}\}.
\]
They also have the same ASR for every victim and trigger candidate and
the same response marginal for every query under a uniformly chosen
hidden victim. Writing
\(\widetilde Q_{\mathrm F}^{\star}\),
\(\widetilde Q_{\mathrm S}^{\star}\), and
\(\widetilde Q_{\mathrm B}^{\star}\) for the corresponding optimal
budgets, their exact zero-error budgets are
\[
\widetilde Q_{\mathrm F}^{\star}(0)=n-1,
\qquad
\widetilde Q_{\mathrm S}^{\star}(0)=\frac{m}{2},
\qquad
\widetilde Q_{\mathrm B}^{\star}(0)=\frac{H}{2}-1.
\]
\end{proposition}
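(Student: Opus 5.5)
The plan is to keep the core responses \eqref{eq:fast_core}–\eqref{eq:slow_core} unchanged and to ``double'' only the target entries of the common responses \eqref{eq:common_responses}. Concretely, we set \(O_{c_h}(v_z)=0\) iff \(h\in\{z,z+1\}\bmod N\) and \(O_{c_h}(v_z)=1\) otherwise. Similarly, \(O_{d_j}(p_i)=0\) iff \(j\in\{i,i+1\}\bmod m\) and \(O_{d_j}(p_i)=3+b_k(j)\) otherwise, identically in both families. With the singleton source, \(g_h(q)\in\{0,1\}\), so for every \(\tau\in(0,1]\) we get \(\widetilde F(c_z)=\{v_z,v_{z-1}\}\) and \(\widetilde F(d_j)=\{p_j,p_{j-1}\}\), with the same ASR in both families. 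The core responses are untouched and the common responses are identical, so the argument of Theorem~\ref{thm:exact_separation} gives matched marginals for every query.

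The key structural step is the zero-error stopping condition of Appendix~\ref{app:decision}: a zero-error algorithm may stop exactly when the set \(S\) of consistent victims has \(\bigcap_{h\in S}\widetilde F(h)\neq\emptyset\). For \(n\ge3\), where \(N\ge8\) and \(m\ge4\), we show this happens iff \(S\) is contained in a cyclically consecutive pair \(\{c_z,c_{z+1}\}\) or \(\{d_j,d_{j+1}\}\). Every query is also constrained on the victims it touches. A \(v_z\) query splits the cores into a 0-pair versus the rest. A \(p_i\) query sends cores to \(\{1,2\}\) and auxiliaries to \(\{0,3,4\}\), so it always reveals the type of the victim.

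\emph{FAST.} For the upper bound we query \(p_1,\dots,p_{n-1}\) (with \(k=i\)). The first answer reveals the type. On cores, these queries reveal bits \(1,\dots,n-1\) of \(z\), leaving \(\{c_{2w},c_{2w+1}\}\), which is a consecutive pair. On auxiliaries, \(p_1,\dots,p_{n-2}\) reveal bits \(1,\dots,n-2\) of \(j<m=2^{n-1}\), leaving \(\{d_{2w},d_{2w+1}\}\). A 0 response at any point already yields a consecutive pair. Hence \(n-1\) queries suffice. For the lower bound, every query is at most binary on the \(N\) cores, and each terminal leaf may contain at most two cores. This forces \(2^{q}\cdot2\ge N\), i.e.\ \(q\ge n-1\).

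\emph{SLOW and binary.} Under binary feedback, each query returns 0 on at most two victims. An adversary always answering ``non-target'' leaves at least \(H-2q\) victims, which forces \(H-2q\le2\). Conversely, querying the disjoint pairs \(v_0,v_2,\dots\) and \(p_0,p_2,\dots\) in order stops after at most \(H/2-1\) queries. For SLOW, the upper bound of \(m/2\) comes from an adaptive scheme. First, one \(p\) query separates the first core half, the second core half, and the auxiliaries. Then we exploit the fact that \(p_i\) isolates \(c_i\) (respectively \(c_{m+i}\)) while the \(v\) and \(p\) queries eliminate consecutive pairs, so each remaining group is shrunk two victims per query. The lower bound uses an adversary that keeps the first core half alive by answering label \(2\) to \(p_i\) and label \(1\) to \(v_z\). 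Each query then removes at most one first-half core via a \(p\) query, or two via a \(v\) query.

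The main obstacle is closing the off-by-one gap between the naive adversary bound of \(m/2-1\) and the claimed \(m/2\). I would first try a potential argument that combines the surviving first-half cores with the surviving auxiliaries, or the second-half core \(c_m\) adjacent to \(c_{m-1}\). The goal is to show that the first query cannot make progress on both groups, so it is effectively wasted relative to the pairing schedule.
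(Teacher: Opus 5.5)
\paragraph{Gap: the SLOW lower bound.} Your construction, the stopping criterion, and the FAST and binary-feedback bounds match the paper. The SLOW lower bound $\widetilde Q_{\mathrm S}^{\star}(0)\ge m/2$, however, is not proved. Your adversary gives only $m/2-1$, and the remaining step is left as something you would try. Counting first-half cores alone cannot do better. After the $m/2-1$ queries $v_0,v_2,\dots,v_{m-4}$ answered with $1$, the surviving first-half cores are $\{c_{m-2},c_{m-1}\}$, which share $v_{m-2}$. What blocks stopping there are victims outside $A$.

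Your first suggestion does close the gap, but the operative fact is not about the first query. On your path every $v_z$ returns $1$ on all $m\ge4$ auxiliaries, and every $p_i$ answered $2$ eliminates them. So the tree cannot stop before some $p$-query has been asked, and a $p$-query removes at most one first-half core. At a stopping node reached after $t$ queries, at least $m-1-2(t-1)=m-2t+1$ first-half cores therefore survive. Every candidate succeeds on at most two victims, so stopping forces $m-2t+1\le 2$, hence $t\ge m/2$ since $m$ is even.

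The paper gets the same bound from the cores alone. The $k$ $v$-queries before the first $p_i$ remove at most two of the $2m$ cores each. One of the cells $L_1(i)=\{c_i\}\cup\{c_{m+j}:j\ne i\}$ and $L_2(i)=\{c_{m+i}\}\cup\{c_j:j\ne i\}$ keeps at least $m-k$ survivors. Inside these cells every later query splits off at most two victims, which gives $m-k\le 2(q-k)$.

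\paragraph{Correction: the SLOW upper bound.} A $p_i$ query does not separate the two core halves; its core cells are $L_1(i)$ and $L_2(i)$. The pairing schedule achieves $m/2$ only for a suitable choice such as $p_0$. Its cells $\{c_1,\dots,c_m\}$ and $\{c_{m+1},\dots,c_{N-1},c_0\}$ are cyclically consecutive runs of length $m$, so each splits into $m/2$ adjacent pairs sharing a $v$-candidate. For $n=3$, starting with $p_1$ leaves the core cell $\{c_1,c_4,c_6,c_7\}$, which contains only one adjacent pair. It needs two further queries, giving $3>m/2$. After a non-target auxiliary answer to $p_0$, the $m/2-1$ survivors are pairwise non-adjacent, so pair elimination does not apply there. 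Plain equality tests resolve them and still fit within the budget.
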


In this variant, an optimal FAST strategy can terminate with two possible
core victims that share a functional trigger candidate. Thus the
separation persists even when recovery does not require victim
identification. Appendix~\ref{app:overlapping_success} gives the
construction and proof.

\subsection{Fixed Failure Probability}
\label{sec:nonzero_delta}

The separation in Theorem~\ref{thm:exact_separation} is not restricted to zero-error recovery. For any fixed
failure probability \(\delta<1\), FAST retains logarithmic query
complexity while SLOW retains linear query complexity.

\begin{corollary}
\label{cor:nonzero_delta}
For \(0\le\delta<1\),
\[
\max\!\left\{
0,\,
\left\lceil
\log_2\bigl((1-\delta)N\bigr)
\right\rceil
\right\}
\le
Q_{\FAST}^\star(\delta)
\le n,
\]
and, for \(n\ge3\),
\[
\max\!\left\{
0,\,
\left\lceil
(1-\delta)m
\right\rceil-1
\right\}
\le
Q_{\SLOW}^\star(\delta)
\le m-1.
\]
Under binary feedback,
\[
Q_{\BIN}^\star(\delta)
=
\max\!\left\{
0,\,
\left\lceil
(1-\delta)H
\right\rceil-1
\right\}.
\]
Consequently, for every fixed \(\delta<1\),
\[
Q_{\FAST}^\star(\delta)=\Theta(\log H),
\qquad
Q_{\SLOW}^\star(\delta)=\Theta(H).
\]
\end{corollary}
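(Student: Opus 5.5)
The plan is to obtain the upper bounds directly from Theorem~\ref{thm:exact_separation}. Since \(Q^\star(\delta)\) is nonincreasing in \(\delta\), we have \(Q_{\FAST}^\star(\delta)\le Q_{\FAST}^\star(0)=n\) and, for \(n\ge3\), \(Q_{\SLOW}^\star(\delta)\le Q_{\SLOW}^\star(0)=m-1\). All lower bounds will come from a single averaging (Yao-type) argument. Because the source is fixed, the only randomness is the algorithm's internal randomness \(R\). Suppose \(\mathcal A\) achieves budget \(q\) with failure at most \(\delta\) for every \(h\). Restrict to a subset \(S\subseteq\mathcal H\) and average over \(h\) uniform on \(S\). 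Then some fixed value of \(R\) yields a deterministic algorithm \(\mathcal A_R\) that, within \(q\) queries, outputs an element of \(F(h)\) for at least \((1-\delta)|S|\) victims \(h\in S\). We may view \(\mathcal A_R\), truncated at depth \(q\), as a decision tree. Each leaf outputs a single candidate, and \(F(h)=\{\phi(h)\}\) with \(\phi\) a bijection. Hence every leaf is correct for at most one victim, and the number of victims handled correctly is at most the number of leaves reachable from \(S\).

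For FAST take \(S=\{c_0,\ldots,c_{N-1}\}\). On core victims every query has at most two responses: \(p_i\) returns \(\{1,2\}\) and \(v_z\) returns \(\{0,1\}\). So at most \(2^q\) leaves are reachable, giving \(2^q\ge(1-\delta)N\) and therefore \(q\ge\lceil\log_2((1-\delta)N)\rceil\), with the \(\max\{0,\cdot\}\) covering \((1-\delta)N<1\). For SLOW take \(S=\{c_0,\ldots,c_{m-1}\}\). On \(S\), query \(p_i\) returns \(1\) only on \(c_i\) and \(2\) otherwise, and \(v_z\) returns \(0\) only on \(c_z\) (when \(z<m\)) and \(1\) otherwise. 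So every node splits the surviving set into a singleton and the rest. A singleton branch contributes exactly one leaf, so a depth-\(q\) tree has at most \(q+1\) reachable leaves on \(S\). This gives \(q+1\ge(1-\delta)m\), and since \(q\) is an integer, \(q\ge\lceil(1-\delta)m\rceil-1\).

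Under binary feedback the same singleton-splitting argument applies with \(S=\mathcal H\). Query \(q\) returns \(1\) exactly for the unique victim \(\phi^{-1}(q)\), so we obtain \(Q_{\BIN}^\star(\delta)\ge\lceil(1-\delta)H\rceil-1\). For the matching upper bound, set \(k=\max\{0,\lceil(1-\delta)H\rceil-1\}\) and use a randomized algorithm. It draws a uniformly random ordering \(\pi\) of \(\mathcal Z\) and queries \(\pi(1),\ldots,\pi(k)\), stopping and outputting the first candidate that returns \(1\). If none returns \(1\), it outputs \(\pi(k+1)\) without querying it. For each fixed \(h\) this succeeds exactly when \(\phi(h)\) lies among the first \(k+1\) positions, which has probability \((k+1)/H\ge1-\delta\). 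This randomization is the step I expect to need the most care. A deterministic algorithm cannot meet the per-victim guarantee, so the lower-bound reduction must correctly pass through averaging, and the ceilings must match exactly.

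Finally, the \(\Theta\) statements follow from \(N=2H/3\) and \(m=H/3\). The first gives \(\log_2((1-\delta)N)=\log_2 H-O_\delta(1)\), and together with the upper bound \(n=\log_2(2H/3)\) this yields \(\Theta(\log H)\). The second gives \((1-\delta)m-1\) and \(m-1\), both \(\Theta(H)\) for fixed \(\delta<1\).
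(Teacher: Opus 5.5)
Your proposal is correct and follows essentially the same route as the paper: derandomize by averaging the per-victim guarantee over a hard subset ($C$ for FAST, $A=\{c_0,\ldots,c_{m-1}\}$ for SLOW, all of $\mathcal H$ for binary feedback). You then bound the number of reachable leaves of the truncated deterministic tree by $2^q$ or $q+1$, and match the binary lower bound with the random-ordering strategy that succeeds with probability $(k+1)/H$. One small imprecision is that not every query splits off a singleton on $A$ (for example, $v_z$ with $z\ge m$ is constant there), but, as the paper notes, constant queries cannot increase the leaf count, so your $q+1$ bound still holds.
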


The lower bounds follow from how many victims can be recovered correctly
within \(q\) queries. After fixing the randomness of the algorithm, a FAST
decision tree has at most \(2^q\) leaves on the \(N\) core victims, so it
can be correct for at most \(2^q\) of them. In SLOW, each query can isolate
at most one victim from the hard subset of size \(m\) and therefore at most \(q+1\) of these victims can be recovered correctly. Since Definition~\ref{def:budget}
requires success probability at least \(1-\delta\) for every victim, these
counts must be at least \((1-\delta)N\) and \((1-\delta)m\), respectively,
which gives the stated lower bounds. The zero-error strategies from
Theorem~\ref{thm:exact_separation} give the upper bounds.
Appendix~\ref{app:nonzero_delta} gives the full proof.

\section{Controlled Experiments with Trained Classifiers}
\label{sec:neural}

\subsection{Training the FAST and SLOW Classifiers}

We test whether our recovery problems of
Section~\ref{sec:construction_exact_costs} can be realized by trained classifiers on CIFAR-10. We use
ResNet-18, replacing the initial ImageNet
convolution by a \(3\times3\) stride-one convolution and removing the
initial max-pooling layer. Class \(0\) is the target class.
Each trigger candidate overwrites the bottom-right \(4\times4\) region
of an image with a binary codeword of Hamming weight eight, replicated
across the three color channels. For \(H=12,24,48\), we use public
codebooks containing \(H\) such candidates. Their minimum pairwise
Hamming distances are \(8\), \(6\), and \(6\), respectively.

For each victim \(h\), we train one FAST classifier and one SLOW
classifier from the same clean model. Clean examples retain their
original CIFAR-10 labels. For a triggered example using candidate
\(q\), the FAST and SLOW classifiers are trained toward the prescribed
labels \(O_h^{\mathrm F}(q)\) and \(O_h^{\mathrm S}(q)\), respectively.
Thus the paired classifiers share the underlying clean task and model
initialization, while the labels assigned to triggered examples follow
the corresponding response tables.
The scaling experiments use \(750\), \(1{,}500\), and \(3{,}000\)
fine-tuning steps for \(H=12,24,48\). This keeps the expected number of
triggered training examples per candidate fixed at \(16{,}000\) for
all three sizes. Appendix~\ref{app:neural_protocol} gives the complete training and evaluation
protocol.

\subsection{Exact Recovery Budgets in Trained Classifiers}
\label{sec:neural_transfer}

We now verify whether the trained classifiers realize the finite
recovery problems of Section~\ref{sec:construction_exact_costs}. Let
\(a\in\{\mathrm F,\mathrm S\}\) denote the FAST or SLOW family and let
\(f_h^a\) be the trained classifier for hidden victim \(h\).
We use a fixed source image \(x_0\), selected using only the clean base
model as specified in Appendix~\ref{app:neural_protocol}. A query
\(q\in\mathcal Q\) returns
\[
\widehat O_h^a(q)
=
f_h^a(T_q(x_0)).
\]
Each query costs one model prediction and the allowed output candidates
are the same \(H\) trigger candidates as in the finite construction.

Functionality is evaluated independently of \(x_0\).
Let
\[
\mathcal X_{\mathrm{test}}
=
\{x:\text{the ground-truth CIFAR-10 label of }x\neq0\}.
\]
This set contains \(9{,}000\) test images. For each trained classifier,
define
\begin{equation}
\widehat g_h^a(z)
=
\frac{1}{|\mathcal X_{\mathrm{test}}|}
\sum_{x\in\mathcal X_{\mathrm{test}}}
\ind\{f_h^a(T_z(x))=0\},
\qquad
\widehat F_\tau^a(h)
=
\{z:\widehat g_h^a(z)\ge\tau\}.
\label{eq:empirical}
\end{equation}
We use \(\tau=0.9\).

The learned and finite recovery problems already have the same allowed
queries, output candidates, and query cost. It therefore remains to
check two quantities: the response returned by every query and the set
of successful outputs for every hidden victim. If both agree, any
adaptive recovery algorithm sees the same recovery problem. The
following proposition formalizes this correspondence.

\begin{proposition}[\textbf{Exact recovery budget transfer}]
\label{prop:transfer}
Fix \(a\in\{\mathrm F,\mathrm S\}\). Suppose
\begin{equation}
\widehat O_h^a(q)=O_h^a(q)
\qquad
\text{for every } h\in\mathcal H
\text{ and } q\in\mathcal Q,
\label{eq:transfer_responses}
\end{equation}
and
\begin{equation}
\widehat F_\tau^a(h)=F(h)
\qquad
\text{for every } h\in\mathcal H.
\label{eq:transfer_functional}
\end{equation}
Then the learned recovery problem and the corresponding finite recovery
problem have the same optimal recovery budget for every
\(\delta\in[0,1)\) under both label feedback and binary feedback.
\end{proposition}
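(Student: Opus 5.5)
The plan is to show that, once the two hypotheses hold, the learned and finite recovery problems are literally the same abstract object. Definition~\ref{def:budget} then assigns them the same optimal budget. First I will make precise what a recovery problem is under the fixed-source query model. It consists of a victim index set \(\mathcal H\), a query set \(\mathcal Q\), an output set \(\mathcal Z\), a response map \((h,q)\mapsto O_h(q)\), and a success map \(h\mapsto F(h)\subseteq\mathcal Z\). Because the source input is the fixed image \(x_0\), there is no source randomness. The only randomness in \(\Pr_h(\cdot)\) is the algorithm's internal randomness.

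Next I will argue that an algorithm's behavior depends on the victim only through the responses it observes. Concretely, I will model any algorithm \(\mathcal A\) as a random seed \(\omega\) together with deterministic rules. One rule picks the next query from \((\omega,\text{past responses})\); another decides whether to stop and what to output. By induction on the query count, under \eqref{eq:transfer_responses} the transcripts in the learned and finite problems coincide for every \(h\) and every \(\omega\). Hence \(T_{\mathcal A}\) and \(Z_{\mathcal A}\) are identical random variables in the two problems, including the case \(T_{\mathcal A}=\infty\), \(Z_{\mathcal A}=\bot\). By \eqref{eq:transfer_functional}, the event \(Z_{\mathcal A}\in\widehat F^a_\tau(h)\) is the same as the event \(Z_{\mathcal A}\in F(h)\). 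Therefore the probability in \eqref{eq:algorithm_budget} agrees for every \(h\), \(q\) and \(\delta\), so \(Q_{\mathcal A}\) agrees. The set of allowed algorithms is the same in both problems, since the queries, outputs and cost model coincide. Taking the infimum then gives equal \(Q^\star\).

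For binary feedback, I will note that the binary response is the composition \(\mathbf 1\{\cdot=0\}\) applied to the label response. Equality of label responses therefore implies equality of binary responses, and the same transcript argument applies verbatim.

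The only delicate step is the formal claim that an arbitrary, possibly randomized and non-terminating, adaptive algorithm interacts with the victim solely through the responses. I would handle it by fixing the seed, which reduces the algorithm to a deterministic, possibly infinite, decision tree. Identical response tables then induce identical root-to-leaf paths, and the remaining steps are bookkeeping.
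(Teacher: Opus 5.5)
Your proposal is correct and follows essentially the same route as the paper's proof. Both fix the algorithm's randomness and show by induction that the transcripts coincide, giving identical stopping time and output. Both then use \(\widehat F_\tau^a(h)=F(h)\) to match the success events, take the infimum over the common class of algorithms, and handle binary feedback by composing with \(\mathbf 1\{\cdot=0\}\).
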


Equation~\eqref{eq:transfer_responses} makes every permitted query
return the same response in the two problems.
Equation~\eqref{eq:transfer_functional} makes every permitted output
succeed for the same hidden victims. Hence any adaptive recovery
algorithm has the same transcript, stopping time, and success event in
both problems. Appendix~\ref{app:neural-transfer} gives the formal
proof.

We audit exactly these two conditions. For
\(H=12,24,48\), every response at \(x_0\) agrees with its prescribed
FAST or SLOW response. Across both families, this gives
$
2(12^2+24^2+48^2)=6{,}048
$
matching responses and no mismatch. The functional set audit also
gives
$
\widehat F_{0.9}^a(h)=\{\phi(h)\}
$
for every trained classifier, which implies that all six learned families satisfy
Proposition~\ref{prop:transfer}.

\begin{table}[t]
\centering
\caption{Exact realization of the FAST and SLOW recovery problems by the trained
classifiers. For \(H=12,24,48\), all audited query responses match the
prescribed response tables and every victim satisfies
\(\widehat F_{0.9}^a(h)=\{\phi(h)\}\). The resulting zero-error recovery
budgets therefore equal the exact theoretical budgets. The reported minimum ASR is the smallest
\(\widehat g_h^a(\phi(h))\) over both families.}
\label{tab:scaling}
\begin{tabular}{c cc c ccc}
\toprule
& \multicolumn{2}{c}{Response errors}
& & \multicolumn{3}{c}{Exact recovery budget} \\
\cmidrule(lr){2-3}
\cmidrule(lr){5-7}
\(H\)
& FAST
& SLOW
& Min. functional ASR
& \(Q_{\mathrm F}^\star(0)\)
& \(Q_{\mathrm S}^\star(0)\)
& \(Q_{\mathrm B}^\star(0)\) \\
\midrule
12 & \(0/144\)  & \(0/144\)  & 0.9997 & 3 & 3  & 11 \\
24 & \(0/576\)  & \(0/576\)  & 0.9999 & 4 & 7  & 23 \\
48 & \(0/2304\) & \(0/2304\) & 0.9998 & 5 & 15 & 47 \\
\bottomrule
\end{tabular}
\end{table}

Since Proposition~\ref{prop:transfer} applies at all three sizes,
Table~\ref{tab:scaling} reports the resulting exact optimal recovery
budgets. At \(H=12\), the FAST and SLOW budgets under label feedback
equal \(3\). The separation appears at \(H=24\) where the budgets
are \(4\) and \(7\), and grows to \(5\) and \(15\) at \(H=48\). Under
binary feedback, the exact budgets are \(11\), \(23\), and \(47\). Note that the learned classifiers also preserve the clean task; the lowest clean test accuracy among all trained victim models is \(94.52\%\), while the clean model used to initialize them has a test
accuracy of \(95.09\%\).

At all three \(H=12,24,48\) cases, the trained classifiers exactly realize the
corresponding finite recovery problems, so the optimal budgets derived
in Section~\ref{sec:construction_exact_costs} transfer without approximation. These experimental realizations
are finite instances of the FAST and SLOW separation that
Theorem~\ref{thm:exact_separation} proves to grow as
\(\Theta(\log H)\) and \(\Theta(H)\), respectively.

\section{Conclusion}
\label{sec:conclusion}

Functional backdoor recovery is not determined by the ASR of trigger
candidates or the marginal distribution of returned labels for each
query. We construct FAST and SLOW families that have the same ASR for
every victim and trigger candidate and the same response marginal for
every query under a uniformly chosen hidden victim, yet their optimal
recovery budgets under label feedback are
\(\Theta(\log H)\) and \(\Theta(H)\). This separation persists for
every fixed failure probability below one and disappears under binary
feedback. The difference is not how often each label is returned, but
which victims return each label and how successive observations
partition the remaining victims.
Trained CIFAR-10 ResNet-18 classifiers at \(H=12,24,48\) exactly
realize the corresponding finite recovery problems, so the theoretical optimal
budgets apply without approximation. These results
show that ASR and response marginals alone do not characterize
functional backdoor recoverability. The association between returned
labels and possible victims across successive queries also determines
how many queries are required for recovery.

\label{end:maintext}

\section*{Reproducibility Statement}
Complete proofs of the theoretical results are provided in
Appendices~\ref{complete_proofs} and~\ref{app:feedback_simulation}. Appendix~\ref{app:neural_protocol} describes the CIFAR-10 ResNet-18
training and evaluation protocol, and Appendix~\ref{app:natural-policy} describes the
additional search-policy experiment. Source code is available at \href{https://anonymous.4open.science/r/Functional-recovery-query-complexity-F8D6}{\textcolor{blue}{here}}.


\section*{AI Use Statement}
AI tools assisted with literature search, mathematical development and checking, analysis code and manuscript editing. The core ideas, theoretical claims, and experimental design were established by the authors. All results from AI were reviewed and verified by the author.

\begingroup
\raggedright
\bibliography{iclr2027_conference}
\endgroup
\bibliographystyle{iclr2027_conference}

\clearpage

\appendix
\section{Proofs of Results}\label{complete_proofs}
\subsection{Proof of Proposition~\ref{prop:iid_recovery}}
\label{app:iid}

\begin{proof}

Let \(E_0\) be the event that none of the sampled trigger candidates belongs
to \(\mathcal F_\tau(f)\), and let
\(E_i=\{|\widehat g_i-g_f(Z_i)|\ge\eta\}\).
Since the \(Z_i\) are independent, \(\Pr(E_0)=(1-\rho)^L\).
Conditional on \(Z_i\), Hoeffding's inequality gives
\(\Pr(E_i\mid Z_i)\le 2e^{-2r\eta^2}\), and hence
\(\Pr(E_i)\le 2e^{-2r\eta^2}\).
By the margin assumption, whenever \(E_i\) does not occur,
\(\widehat g_i\ge\tau\) if and only if
\(Z_i\in\mathcal F_\tau(f)\). Therefore, failure can occur only if
\(E_0\) or at least one \(E_i\) occurs. By the union bound,
\begin{equation}
\Pr[\mathrm{failure}]
\le
(1-\rho)^L+2Le^{-2r\eta^2}.
\end{equation}

For
\[
L=
\left\lceil
\rho^{-1}\log\frac{2}{\delta}
\right\rceil,
\qquad
r=
\left\lceil
\frac{1}{2\eta^2}
\log\frac{4L}{\delta}
\right\rceil,
\]
we have \((1-\rho)^L\le e^{-\rho L}\le\delta/2\) and
\(2Le^{-2r\eta^2}\le\delta/2\), which implies \(\Pr[\mathrm{failure}]\le\delta\).
Since each of the \(L\) trigger candidates is evaluated on \(r\) source
inputs, the procedure uses \(Lr\) model queries.
\end{proof}

The bound is stated for a fixed victim model \(f\). It also holds
uniformly over a family of victim models if the same distribution
\(Q_Z\) is used and common positive lower bounds on \(\rho\) and \(\eta\)
hold for every classifier in the family. Without a positive lower bound on
\(\eta\), this argument does not yield a finite uniform value of \(r\).

\subsection{Finite Functional Recovery as a Decision Problem}
\label{app:decision}

Fix finite sets $\HH$, $\QQ$, and $\ZZ$, a deterministic response table $O$, and nonempty success sets $F(h)\subseteq\ZZ$. The family, table, and success relation specify the problem and are public; the actual index is hidden. For a nonempty version space $S\subseteq\HH$, let $D_F(S)$ be the minimum worst-case number of additional queries needed for zero-error recovery. Write
\[
 S_{q,y}=\{h\in S:O_h(q)=y\},\qquad
 \Pi_q(S)=\{S_{q,y}:S_{q,y}\ne\varnothing\}.
\]
\begin{lemma}
\label{lem:recursion}
The exact recursion is
\begin{equation}
 D_F(S)=
 \begin{cases}
 0,&\bigcap_{h\in S}F(h)\ne\varnothing,\\[2pt]
 1+\displaystyle\min_{q:\,|\Pi_q(S)|\ge2}\ \max_{C\in\Pi_q(S)}D_F(C),&\text{otherwise},
 \end{cases}
 \label{eq:bellman}
\end{equation}
where a minimum over an empty set is $+\infty$. Only queries that split $S$ are included: a constant query consumes one unit without changing the state. This restriction makes every recursive child a proper subset.
\end{lemma}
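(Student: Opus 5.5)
We will prove the recursion by strong induction on \(|S|\), working with deterministic adaptive strategies. Because the response table \(O\) is deterministic and the problem is zero-error, a randomized algorithm succeeds within \(q\) queries for every \(h\) only if each fixed realization of its internal randomness does; so it suffices to optimize over deterministic decision trees. Such a tree starting from version space \(S\) is either a leaf labeled by an output \(z\), or an internal node labeled by a query \(q\) with one subtree for each observed response \(y\). The tree is zero-error on \(S\) exactly when, for every \(h\in S\), the leaf reached by \(h\) outputs some \(z\in F(h)\).

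\emph{Base case and stopping condition.} A depth-\(0\) tree is a single leaf \(z\). It is correct on \(S\) if and only if \(z\in F(h)\) for all \(h\in S\), that is, if and only if \(\bigcap_{h\in S}F(h)\neq\varnothing\). This gives \(D_F(S)=0\) exactly in the first case of~\eqref{eq:bellman}. In particular, if \(|S|=1\), then \(D_F(S)=0\) because \(F(h)\) is nonempty.

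\emph{Upper bound in the second case.} Suppose the intersection is empty and \(q\) splits \(S\). We query \(q\), observe \(y\), and then run an optimal tree for the child \(S_{q,y}\). This tree has worst-case depth \(1+\max_{C\in\Pi_q(S)}D_F(C)\). It is correct because the true \(h\) always lies in \(S_{q,O_h(q)}\), and every child of a splitting query is a proper subset of \(S\). Minimizing over splitting \(q\) gives \(\le\). If no query splits \(S\), the right-hand side is \(+\infty\), and we must show \(D_F(S)=+\infty\); this is covered by the lower-bound argument.

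\emph{Lower bound.} Take any correct tree on \(S\) of finite depth \(d\). Since the intersection is empty, the root is not a leaf. Follow the path from the root through nodes whose query is constant on \(S\). Each such node has only one response that occurs for victims in \(S\), so the relevant set stays equal to \(S\). The path must reach a node with a splitting query, because a leaf on this path would have to be correct on all of \(S\), which is impossible. Let \(q\) be the first splitting query on the path, at depth \(k\ge0\).

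Then the subtree at response \(y\) is a correct tree for \(S_{q,y}\). By the induction hypothesis, applied since \(S_{q,y}\subsetneq S\), that subtree has depth at least \(D_F(S_{q,y})\). Hence
\[
d\ge k+1+\max_{C\in\Pi_q(S)}D_F(C)\ge 1+\min_{q'}\max_{C}D_F(C).
\]
If no splitting query exists, this path never ends, which contradicts finite depth, so \(D_F(S)=+\infty\).

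The main obstacle is the careful handling of constant queries. We must justify that they can be skipped without loss: they appear only as a prefix of length \(k\ge0\), which only weakens the bound. We must also check that the induction is well-founded, which holds because only proper subsets are recursed on. A minor point is confirming the reduction from randomized to deterministic algorithms under the definition of \(Q_{\mathcal A}(F,0)\).
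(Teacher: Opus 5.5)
Your proof is correct and follows essentially the same route as the paper: restrict to deterministic trees, strip the prefix of queries constant on \(S\), use the first splitting query for the lower bound, and for the upper bound attach optimal child strategies to a minimizing splitting query. One point to tighten is the randomized reduction. Zero error only gives that each victim's success event has probability one, not that it holds for every realization, so you need the argument of Lemma~\ref{lemma_2}: finiteness of \(\HH\) yields a single realization that is correct for all victims simultaneously. Also, the bound on each subtree's depth follows directly from the definition of \(D_F(C)\), so the induction hypothesis is not needed there.
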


\begin{proof}
First consider deterministic strategies. A strategy using no further queries returns a fixed $z$, and is correct on $S$ exactly when $z\in\bigcap_{h\in S}F(h)$. Suppose this intersection is empty. Queries constant on $S$ can be deleted without changing any subsequent response or output. If a correct strategy with a finite worst case query budget exists, its first remaining query $q$ therefore satisfies $|\Pi_q(S)|\ge2$. For each $C\in\Pi_q(S)$ its continuation is a correct strategy on $C$, so its worst case query count $d$ satisfies
\[
 d\ge 1+\max_{C\in\Pi_q(S)}D_F(C)
 \ge 1+\min_{q':\,|\Pi_{q'}(S)|\ge2}\max_{C\in\Pi_{q'}(S)}D_F(C).
\]
Conversely, if the last minimum is finite, finiteness of $\QQ$ gives a minimizing query. Attach an optimal strategy to each of its children. Induction on $|S|$ applies because every child is a proper subset, and gives the reverse inequality. If the minimum is infinite, the preceding lower bound excludes every strategy with a finite worst case query budget. If there is no splitting query, every transcript is the same for all $h\in S$, and no output is correct on $S$. The lemma below extends the deterministic conclusion to randomized zero-error strategies.
\end{proof}

This recursion is a standard decision-tree argument. The sets $R_z=\{h:z\in F(h)\}$ are decision regions: termination requires $S\subseteq R_z$ for some $z$ \citep{javdani2014decision}. Identification is sufficient but generally unnecessary. In the FAST and SLOW construction, the successful outputs are distinct singletons, so it is also necessary. The FAST and SLOW construction uses this identification special case to separate the matched statistics from the optimal recovery budget.

\begin{lemma}\label{lemma_2}
For finite $\HH$ and a deterministic oracle, randomization cannot improve the zero-error worst case recovery budget.
\end{lemma}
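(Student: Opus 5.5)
The plan is to show that every randomized zero-error algorithm with worst-case budget \(q\) contains a deterministic zero-error algorithm with budget at most \(q\). The matching inequality is trivial, because deterministic strategies are a special case of randomized ones. A randomized algorithm \(\mathcal A\) can be viewed as a random seed \(\omega\) drawn from some probability space, together with a deterministic strategy \(\mathcal A_\omega\) for each seed. The oracle is deterministic and the source is fixed, so for each \(h\) the transcript, stopping time \(T_{\mathcal A}\) and output \(Z_{\mathcal A}\) are functions of \(\omega\) alone.

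First, unpack zero error at budget \(q\). For each \(h\in\HH\), the event
\[
E_h=\{\omega: T_{\mathcal A_\omega}(h)\le q,\ Z_{\mathcal A_\omega}(h)\in F(h)\}
\]
must have probability one. Because \(\HH\) is finite, \(\bigcap_{h\in\HH}E_h\) also has probability one and is in particular nonempty. Pick any seed \(\omega^\ast\) in this intersection. Then the deterministic strategy \(\mathcal A_{\omega^\ast}\) stops within \(q\) queries and returns a correct output for every \(h\in\HH\). Hence the deterministic zero-error optimum is at most \(q\). Combined with the trivial direction and Lemma~\ref{lem:recursion}, this shows that the randomized zero-error budget equals \(D_F(\HH)\).

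The only real obstacle is measurability, together with making "randomized algorithm" precise. The events \(E_h\) must be measurable, and the map \(\omega\mapsto\mathcal A_\omega\) must be well defined even when the algorithm draws fresh coins after each response. I would fix this by adopting the convention that all internal coins are sampled in advance as one seed. This is without loss of generality for a deterministic oracle. With that convention, the algorithm queries a fixed finite table, so each \(E_h\) is determined by a finite sequence of measurable choices. Finiteness of \(\HH\) is exactly what permits intersecting the full-measure events. If the infimum in Definition~\ref{def:budget} is not attained, the argument is applied at each admissible \(q\), which is enough.
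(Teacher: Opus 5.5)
Your proposal is correct and follows essentially the same route as the paper. You define the events \(E_h\) over the random seed and use finiteness of \(\HH\) to get a full-measure (hence nonempty) intersection, then fix a seed in it. The reverse inequality is trivial because deterministic strategies are admissible randomized ones. Your added remarks on pre-sampling the coins as one seed, and on measurability, only make explicit what the paper leaves implicit.
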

\begin{proof}
Let $R$ include all random choices of a strategy that satisfies the zero-error deadline $q$. Since the oracle is deterministic, fixing $R=r$ fixes the entire strategy. For each $h$, define
\[
 E_h=\{r:T_{\mathcal A}(h,r)\le q,\ Z_{\mathcal A}(h,r)\in F(h)\}.
\]
The guarantee gives $\Pr_R(E_h)=1$. Finiteness of $\HH$ implies
\[
 \Pr_R\!\left(\bigcap_{h\in\HH}E_h\right)
 =1-\Pr_R\!\left(\bigcup_{h\in\HH}E_h^c\right)
 \ge1-\sum_{h\in\HH}\Pr_R(E_h^c)=1.
\]
Fix any $r$ in this intersection. The resulting deterministic strategy is correct within \(q\) queries for every victim. Conversely, every deterministic strategy is admissible as a randomized one. Their optimal zero-error budgets therefore agree.
\end{proof}

\subsection{Proof of Theorem~\ref{thm:exact_separation}}
\label{app:exact_separation}

\begin{proof}
Let $C=\{c_0,\ldots,c_{N-1}\}.$
By the result of Appendix~\ref{app:decision}, it suffices to
consider deterministic decision trees.

\paragraph{Success sets and response marginals.}
Equations~\ref{eq:fast_core}--\ref{eq:common_responses} give, for every
\(h\in\mathcal H\) and \(q\in\mathcal Q\),
\[
O_h^{\mathrm F}(q)=0
\quad\Longleftrightarrow\quad
O_h^{\mathrm S}(q)=0
\quad\Longleftrightarrow\quad
q=\phi(h).
\]
Since the source distribution is a singleton and \(0\) is the target label,
\[
g_h^{\mathrm F}(q)
=
g_h^{\mathrm S}(q)
=
\mathbf 1\{q=\phi(h)\}.
\]
Because \(\tau\in(0,1]\), both families therefore satisfy
\[
F(h)=\{\phi(h)\}.
\]

For each \(p_i\), bit balance in FAST gives
\[
\bigl|\{h\in C:O_h^{\mathrm F}(p_i)=1\}\bigr|
=
\bigl|\{h\in C:O_h^{\mathrm F}(p_i)=2\}\bigr|
=
m.
\]
Equation~\ref{eq:slow_core} gives the same two counts in SLOW. All responses
on auxiliary victims and all \(v_z\) responses are identical in the two
families. Therefore, for \(U\sim\operatorname{Unif}(\mathcal H)\), every
\(q\in\mathcal Q\) and every \(y\in\{0,1,2,3,4\}\),
\[
\Pr[O_U^{\mathrm F}(q)=y]
=
\Pr[O_U^{\mathrm S}(q)=y].
\]

\paragraph{FAST.}
For the upper bound, \(m=2^{n-1}\ge n\) for \(n\ge2\), so
\(p_0,\ldots,p_{n-1}\) are available. If the victim is \(c_z\), their
responses determine
\[
z
=
\sum_{i=0}^{n-1}
2^i\bigl(O_{c_z}^{\mathrm F}(p_i)-1\bigr).
\]
If the victim is auxiliary, the response to \(p_0\) belongs to
\(\{0,3,4\}\), whereas a core response belongs to \(\{1,2\}\), so the two
cases are distinguished immediately. A target response to \(p_i\) identifies
\(d_i\). If no target response occurs among
\(p_0,\ldots,p_{n-2}\), then for \(h=d_j\),
\[
j
=
\sum_{i=0}^{n-2}
2^i\bigl(O_{d_j}^{\mathrm F}(p_i)-3\bigr),
\]
because \(j<m=2^{n-1}\). Thus the victim, and hence its unique successful
trigger candidate, can be determined within \(n\) queries. Therefore
\[
Q_{\mathrm F}^\star(0)\le n.
\]

For the lower bound, restrict any correct decision tree of depth \(q\) to the
\(N=2^n\) core victims. Every permitted query has at most two possible
responses on \(C\), so the restricted tree has at most \(2^q\) nonempty
leaves. Since different core victims require different successful trigger
candidates, each leaf can contain at most one core victim. Hence
\[
N\le2^q,
\qquad
q\ge\log_2N=n.
\]
Therefore
\[
Q_{\mathrm F}^\star(0)=n.
\]

\paragraph{SLOW.}
For the lower bound, let
\[
a_i=c_i,\qquad b_i=c_{m+i},
\qquad
A=\{a_0,\ldots,a_{m-1}\}.
\]
On \(A\), every \(p_i\) is an equality test for \(a_i\), while every \(v_z\)
is either an equality test or constant. Thus, for any query, there is a
response that eliminates at most one possible victim from \(A\). Following
such a response after each query leaves at least \(m-k\) possible victims
after \(k\) queries. If \(q<m-1\), at least two victims remain consistent
with the transcript, and they require different successful trigger
candidates. Hence
\[
Q_{\mathrm S}^\star(0)\ge m-1.
\]

For the upper bound, assume \(n\ge3\) and first query \(p_0\). Response \(0\)
identifies \(d_0\). A response in \(\{3,4\}\) shows that the victim is
auxiliary. The responses to \(p_0,\ldots,p_{n-2}\), together with a possible
target response, then determine its index using the same bit representation
as in the FAST case, requiring at most \(n-1\) queries in total.

It remains to consider a core response to \(p_0\). Query \(p_1\) next. The
four possible response pairs leave
\[
\begin{array}{c|c}
(O_h^{\mathrm S}(p_0),O_h^{\mathrm S}(p_1))
&
\text{remaining core victims}
\\ \hline
(1,2) & \{a_0,b_1\}\\
(2,1) & \{b_0,a_1\}\\
(1,1) & \{b_j:j\notin\{0,1\}\}\\
(2,2) & \{a_j:j\notin\{0,1\}\}.
\end{array}
\]
Any set of \(r\) remaining core victims can be resolved using at most
\(r-1\) of their \(v_z\) candidates: a target response identifies the victim,
and after \(r-1\) non-target responses the final unqueried victim is
determined. The two-victim branches therefore require at most three queries
in total. Each of the other two branches contains \(m-2\) victims and
requires at most $2+(m-3)=m-1$
queries in total. Consequently,
\[
Q_{\mathrm S}^\star(0)
\le
\max\{n-1,3,m-1\}
=
m-1
\qquad(n\ge3).
\]
Combining the two bounds gives
\[
Q_{\mathrm S}^\star(0)=m-1=2^{n-1}-1
\qquad(n\ge3).
\]

When \(n=2\), the first query \(p_0\) either identifies an auxiliary victim
or leaves two possible core victims, which one \(v_z\) query resolves.
Conversely, every query has at most two possible responses on the four core
victims, so a depth-one decision tree has at most two nonempty leaves and
cannot distinguish all four victims. Hence $Q_{\mathrm S}^\star(0)=2.$

\paragraph{Binary feedback.}
Replacing each returned label \(y\) by \(\mathbf 1\{y=0\}\) gives, for
every \(h\in\mathcal H\) and \(q\in\mathcal Q\),
\[
\mathbf 1\{O_h^{\mathrm F}(q)=0\}
=
\mathbf 1\{O_h^{\mathrm S}(q)=0\}
=
\mathbf 1\{q=\phi(h)\}.
\]
Thus each query is an equality test for one possible victim. Along a sequence
of \(k\) non-target responses, at least \(H-k\) victims remain possible.
Since distinct victims require distinct successful trigger candidates,
\(H-1\) queries are necessary in the worst case.

For the matching upper bound, query \(H-1\) distinct trigger candidates,
returning a queried candidate immediately upon a target response and
returning the final unqueried candidate after \(H-1\) non-target responses.
Therefore
\[
Q_{\mathrm B}^\star(0)=H-1.
\]
\end{proof}

\subsubsection{Marginal Matching under the Uniform Prior}
\label{app:marginal}

For a fixed query \(q\) and
\(U\sim\operatorname{Unif}(\mathcal H)\), let
\[
Y_{\mathrm F}=O_U^{\mathrm F}(q),
\qquad
Y_{\mathrm S}=O_U^{\mathrm S}(q).
\]
Since both responses are deterministic given \(U\),
\[
\mathsf H(Y_{\mathrm F}\mid U)
=
\mathsf H(Y_{\mathrm S}\mid U)
=
0.
\]
Hence
\[
I(U;Y_{\mathrm F})=\mathsf H(Y_{\mathrm F}),
\qquad
I(U;Y_{\mathrm S})=\mathsf H(Y_{\mathrm S}).
\]
Equation~\ref{eq:matched_marginals} therefore implies
\[
I(U;Y_{\mathrm F})=I(U;Y_{\mathrm S})
\]
for every fixed query \(q\).
This equality does not determine which victims produce each response. For
example, when \(n=3\), the core victims returning label \(1\) to \(p_0\) are
$\{c_0,c_2,c_4,c_6\}$
in FAST whereas they are $\{c_0,c_5,c_6,c_7\}$
in SLOW. The two sets have the same size, but their intersections with the
response classes of later queries differ. Thus identical single-query
marginals do not imply the same partition of the remaining victims after a
sequence of observations.

If the response marginals were instead required to agree under every prior on
\(\mathcal H\), then choosing the point-mass prior at any
\(h\in\mathcal H\) would give
\[
\mathbf 1\{O_h^{\mathrm F}(q)=y\}
=
\mathbf 1\{O_h^{\mathrm S}(q)=y\}
\]
for every \(h\), \(q\), and \(y\). The two response tables would therefore
coincide entrywise. Theorem~\ref{thm:exact_separation} requires marginal
matching only under the specified uniform prior. This is weaker
than matching the state-conditioned observation laws used in comparisons
of statistical experiments \citep{blackwell1953equivalent}.

\subsection{Proof of Corollary~\ref{cor:Unbounded Query-Complexity Gap}}
\label{app:proof_of_corollary_1}
\begin{proof}
Suppose that a function of the matched ASRs and response marginals
approximates the optimal zero-error recovery budget within a fixed factor
\(c\ge1\). For each \(n\), FAST and SLOW give the same inputs to this
function, so it must assign the same value \(v\) to both families. Therefore
\[
\frac{1}{c}Q_{\FAST}^\star(0)
\le v\le
cQ_{\FAST}^\star(0),
\qquad
\frac{1}{c}Q_{\SLOW}^\star(0)
\le v\le
cQ_{\SLOW}^\star(0).
\]
These inequalities require
\[
\frac{Q_{\SLOW}^\star(0)}
     {Q_{\FAST}^\star(0)}
\le c^2.
\]
For \(n\ge3\), Theorem~\ref{thm:exact_separation} gives
\[
\frac{Q_{\SLOW}^\star(0)}
     {Q_{\FAST}^\star(0)}
=
\frac{H/3-1}{\log_2(2H/3)}
\longrightarrow\infty.
\]
Hence the inequality fails for sufficiently large \(n\), a contradiction.
\end{proof}

\subsection{Proof of Proposition~\ref{prop:overlapping_success}}
\label{app:overlapping_success}

\begin{proof}
Keep the core responses to the \(p_i\) queries unchanged and modify
the common responses as follows:
\begin{align*}
\widetilde O^a_{d_j}(p_i)
&=
\begin{cases}
0, & j=i \text{ or } j=(i+1)\bmod m,\\
3+b_{i\bmod n}(j), & \text{otherwise},
\end{cases}\\
\widetilde O^a_h(v_z)
&=
\begin{cases}
0, & h=c_z \text{ or } h=c_{(z+1)\bmod N},\\
1, & \text{otherwise},
\end{cases}
\qquad a\in\{\mathrm F,\mathrm S\}.
\end{align*}
Since the source distribution is a singleton and \(0\) is the target
label, a candidate has ASR one exactly on the victims for which it
returns \(0\). Hence, for every \(0<\tau\le1\),
\[
\widetilde F(c_z)=\{v_z,v_{(z-1)\bmod N}\},
\qquad
\widetilde F(d_j)=\{p_j,p_{(j-1)\bmod m}\}.
\]
The target responses are identical in FAST and SLOW. For each \(p_i\),
the unchanged core responses contain \(m\) occurrences of each of
labels \(1\) and \(2\) in both families, while all auxiliary responses
are identical. The responses to every \(v_z\) are also identical.
Thus the two families have the same complete ASR function and the same
response marginal for every query under a uniformly chosen victim.

\paragraph{FAST.}
Query \(p_1,\ldots,p_{n-1}\). A target response identifies a successful
queried candidate. Otherwise, the responses distinguish core from
auxiliary victims and determine the corresponding index up to its
least significant bit. The two remaining core victims share one
\(v_z\), and the two remaining auxiliary victims share one \(p_i\).
Thus
\[
\widetilde Q_{\mathrm F}^{\star}(0)\le n-1.
\]
Conversely, on the \(N\) core victims every query has at most two
responses. A depth-\(q\) decision tree therefore has at most \(2^q\)
leaves, and the output at each leaf can be successful for at most two
core victims. Zero-error recovery requires $N\le 2^{q+1},$
so \(q\ge n-1\). Lemma~\ref{lemma_2} shows that randomization cannot
improve the zero-error worst-case budget. Therefore
\[
\widetilde Q_{\mathrm F}^{\star}(0)=n-1.
\]

\paragraph{SLOW.}
For the upper bound, first query \(p_0\). A target response terminates
the search. A core response leaves one of the two sets
\[
\{c_1,\ldots,c_m\},
\qquad
\{c_{m+1},\ldots,c_{N-1},c_0\},
\]
each containing \(m\) consecutive core victims. Each set can be
partitioned into \(m/2\) adjacent pairs, and every such pair shares a
candidate \(v_z\). Querying the candidates for all but the final pair
therefore uses \(m/2\) queries in total. A non-target auxiliary response
leaves \(m/2-1\) possible auxiliary victims, which can be resolved with
at most \(m/2-2\) additional candidate queries. Hence
\[
\widetilde Q_{\mathrm S}^{\star}(0)\le \frac m2.
\]

For the lower bound, consider a deterministic zero-error tree of depth
\(q<m/2\) and restrict it to the core victims. Before the first
\(p_i\) query, follow the non-target response to every \(v_z\) query.
If \(k\) such queries occur, at least \(2m-2k\) core victims remain.
The tree cannot terminate on this path because every candidate is
successful for at most two core victims.

Let \(p_i\) be the first \(p\)-query. One of its two response cells
contains at least \(m-k\) of the remaining victims. The two full core
response cells of \(p_i\) are
\[
L_1(i)=\{c_i\}\cup\{c_{m+j}:j\ne i\},
\qquad
L_2(i)=\{c_{m+i}\}\cup\{c_j:j\ne i\}.
\]
Within either cell, every subsequent \(p\)-query is constant or
separates at most two victims from the rest, and every \(v\)-query has
a target cell of size at most two. Since every output is successful
for at most two core victims, a depth-\(r\) subtree can be correct on
at most \(2(r+1)\) victims, by induction on \(r\). With
\(r=q-k-1\), zero-error recovery therefore requires
$
m-k\le 2(q-k),
$
which implies \(q\ge m/2\), a contradiction. Thus
\[
\widetilde Q_{\mathrm S}^{\star}(0)=\frac m2.
\]

\paragraph{Binary feedback.}
Under binary feedback, every \(p_i\) query is successful exactly on
\(\{d_i,d_{(i+1)\bmod m}\}\), and every \(v_z\) query is successful
exactly on \(\{c_z,c_{(z+1)\bmod N}\}\). Along a sequence of \(q\)
non-target responses, at least \(H-2q\) victims remain possible.
Since no candidate is successful for more than two victims,
zero-error recovery requires $H-2q\le2,$
and hence \(q\ge H/2-1\).

For the matching upper bound, the candidates
$
\{p_0,p_2,\ldots,p_{m-2}\}
\cup
\{v_0,v_2,\ldots,v_{N-2}\}
$
have disjoint success sets that partition the \(H\) victims into
\(H/2\) pairs. Query all but one of these candidates and return the
remaining candidate after all non-target responses. Therefore
\[
\widetilde Q_{\mathrm B}^{\star}(0)=\frac H2-1.
\]


\end{proof}

\subsection{Proof of Corollary~\ref{cor:nonzero_delta}}
\label{app:nonzero_delta}

\begin{proof}
Suppose a randomized strategy satisfies Definition~\ref{def:budget} at an integer budget $q$. Stop it after $q$ queries, retaining its output if it has returned, and assigning a fixed candidate otherwise. Let $\widetilde Z(h,R)$ denote this output, where $R$ contains all internal randomness. For each $h$,
\[
 \{T_{\mathcal A}(h,R)\le q,\ Z_{\mathcal A}(h,R)=\phi(h)\}
 \subseteq\{\widetilde Z(h,R)=\phi(h)\}.
\]
Fixing $R=r$ gives a deterministic tree of depth at most $q$. For any nonempty subfamily $S$, define its number of successful states as
\[
 K_S(r)=\sum_{h\in S}\ind\{\widetilde Z(h,r)=\phi(h)\}.
\]
By Definition~\ref{def:budget} and the inclusion above, for every
\(h\in\mathcal H\),
\[
\Pr_R[\widetilde Z(h,R)=\phi(h)]\ge 1-\delta.
\]
Therefore, for any nonempty \(S\subseteq\mathcal H\),
\begin{equation}
(1-\delta)|S|
\le
\sum_{h\in S}\Pr_R[\widetilde Z(h,R)=\phi(h)]
=
\mathbb E_R[K_S(R)].
\label{eq:counting-average}
\end{equation}
It is enough to bound \(K_S(r)\) for every fixed \(r\) since
Equation~\ref{eq:counting-average} averages these bounds over the
randomness of the algorithm.

\paragraph{FAST.}
Take $S=C$. Each query has at most two responses on $C$, so the tree has at most $2^q$ nonempty leaves there. Each leaf returns a single trigger candidate and distinct core victims
have distinct unique functional trigger candidates. Hence each leaf can
be correct for at most one core victim. Consequently $K_C(r)\le\min\{N,2^q\}$ for every $r$, and Equation~\ref{eq:counting-average} yields
\[
 (1-\delta)N\le2^q,\qquad
 q\ge\max\{0,\lceil\log_2((1-\delta)N)\rceil\}.
\]

\paragraph{SLOW.}
Take $S=A$ as in Appendix~\ref{app:exact_separation}. Every query is constant or splits off one state. A restricted depth-$q$ tree of this form has at most $q+1$ nonempty leaves. At depth zero the bound is one; a splitting root contributes at most one singleton leaf and at most $q$ leaves below its other child by induction. Constant roots cannot increase this bound. Thus
\[
 K_A(r)\le\min\{m,q+1\},\qquad
 (1-\delta)m\le q+1,
\]
which gives $q\ge\max\{0,\lceil(1-\delta)m\rceil-1\}$. The FAST and SLOW upper bounds are their zero-error strategies. The SLOW lower bound uses only $A$ and need not be tight for the full family at $\delta>0$.

\paragraph{Binary feedback.}
Every query is an equality test on $\HH$, so the same argument gives
\[
 q\ge q_\delta:=\max\{0,\lceil(1-\delta)H\rceil-1\}.
\]
Since $0\le q_\delta\le H-1$, sample $q_\delta+1$ distinct candidates uniformly without replacement and order them uniformly. Query the first $q_\delta$, returning a target hit immediately; after all misses return the last candidate. For every $h$, this strategy uses at most $q_\delta$ queries and succeeds exactly when $\phi(h)$ is in the sampled list. Hence
\[
 \Pr_h[T\le q_\delta,\ Z=\phi(h)]
 =\frac{q_\delta+1}{H}\ge1-\delta.
\]
This also covers $q_\delta=0$, when the strategy returns a uniformly sampled candidate without querying.

For fixed $\delta<1$, $\log_2((1-\delta)N)=n+\log_2(1-\delta)$ and $(1-\delta)m$ is proportional to $m$. Combining the lower and upper bounds gives the stated asymptotic orders. These conclusions do not assert a uniform separation when $\delta$ approaches one as $H$ increases.
\end{proof}

\subsection{Proof of Proposition~\ref{prop:transfer}}
\label{app:neural-transfer}

\begin{proof}
Fix \(a\in\{\mathrm F,\mathrm S\}\). Let
\(\widehat{\mathcal I}_a\) denote the access model defined by the
trained classifiers and let \(\mathcal I_a\) denote the corresponding
finite access model. By construction, the two access models have the
same set of hidden victims, allowed queries, output candidates, and
query cost.

Consider any recovery algorithm \(\mathcal A\). Fix a hidden victim
\(h\) and a realization of the randomness of \(\mathcal A\). We show
by induction on the number of queries that \(\mathcal A\) produces the
same transcript under \(\widehat{\mathcal I}_a\) and
\(\mathcal I_a\).
Before the first query, the two transcripts are identical. Suppose
they are identical after \(k\) queries. Since \(\mathcal A\) has the
same transcript and the same fixed randomness in both executions, it
chooses the same next query \(q\). By
Equation~\eqref{eq:transfer_responses},
\[
\widehat O_h^a(q)=O_h^a(q).
\]
The next response is therefore the same in both executions. Hence the
transcripts remain identical after \(k+1\) queries.

It follows that the complete transcripts are identical. In particular,
\(\mathcal A\) has the same stopping time and returns the same output
in both access models. Equation~\eqref{eq:transfer_functional} gives
\[
\widehat F_\tau^a(h)=F(h)
\]
for every hidden victim \(h\). Therefore, for every
\(q\in\mathbb N_0\),
\[
\Pr_h\!\left[
T_{\mathcal A}^{\widehat{\mathcal I}_a}\le q,\,
Z_{\mathcal A}^{\widehat{\mathcal I}_a}
\in\widehat F_\tau^a(h)
\right]
=
\Pr_h\!\left[
T_{\mathcal A}^{\mathcal I_a}\le q,\,
Z_{\mathcal A}^{\mathcal I_a}\in F(h)
\right].
\]
Thus
\[
Q_{\mathcal A}^{\widehat{\mathcal I}_a}
(\widehat F_\tau^a,\delta)
=
Q_{\mathcal A}^{\mathcal I_a}
(F,\delta)
\]
for every \(\delta\in[0,1)\). Since the two access models allow the
same recovery algorithms, taking the infimum over \(\mathcal A\)
gives
\[
Q_{\widehat{\mathcal I}_a}^{\star}
(\widehat F_\tau^a,\delta)
=
Q_{\mathcal I_a}^{\star}
(F,\delta).
\]

Under binary feedback, Equation~\eqref{eq:transfer_responses} also
implies
\[
\ind\{\widehat O_h^a(q)=0\}
=
\ind\{O_h^a(q)=0\}
\]
for every \(h\) and \(q\). The same argument therefore gives equality
of the optimal recovery budgets under binary feedback.
\end{proof}




\section{Simulation Between Feedback Models}
\label{app:feedback_simulation}

Consider two access models \(\mathcal I_1\) and \(\mathcal I_2\).
They have the same allowed queries, allowed output candidates, success sets
\(F(h)\), and cost per query. Assume that the response kernels are memoryless:
conditional on the hidden victim \(h\) and the current query \(q\), the next
response under \(\mathcal I_i\) has distribution \(P_i(y\mid h,q)\),
independently of the previous transcript. The query may still be chosen
adaptively. We say that \(\mathcal I_1\) can simulate \(\mathcal I_2\)
if there exists a known stochastic response transformation
\(R(\widetilde y\mid y,q)\), independent of \(h\), such that
\[
P_2(\widetilde y\mid h,q)
=\sum_y R(\widetilde y\mid y,q)P_1(y\mid h,q)
\]
for every \(h,q,\widetilde y\).

\begin{proposition}[\textbf{Monotonicity under feedback simulation}]
\label{prop:feedback_simulation}
If \(\mathcal I_1\) can simulate \(\mathcal I_2\), then for every
\(\delta\in[0,1)\),
\[
Q^\star_{\mathcal I_1}(F,\delta)
\le
Q^\star_{\mathcal I_2}(F,\delta).
\]
\end{proposition}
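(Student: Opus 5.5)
The plan is a simulation argument: we convert any algorithm \(\mathcal A_2\) for \(\mathcal I_2\) into an algorithm \(\mathcal A_1\) for \(\mathcal I_1\) with the same query count and the same joint law of (stopping time, output) under every hidden victim \(h\). Then \(Q_{\mathcal A_1}(F,\delta)=Q_{\mathcal A_2}(F,\delta)\). Since the success sets \(F(h)\) coincide in the two models, taking the infimum over \(\mathcal A_2\) gives \(Q^\star_{\mathcal I_1}(F,\delta)\le Q^\star_{\mathcal I_2}(F,\delta)\).

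The construction of \(\mathcal A_1\) runs \(\mathcal A_2\) internally, with its own randomness. Whenever \(\mathcal A_2\) asks to query \(q\), \(\mathcal A_1\) issues \(q\) to the \(\mathcal I_1\) oracle and receives \(y\). It then draws \(\widetilde y\sim R(\cdot\mid y,q)\) using fresh independent randomness and feeds \(\widetilde y\) to \(\mathcal A_2\) as the response. \(\mathcal A_1\) stops exactly when \(\mathcal A_2\) stops and returns \(\mathcal A_2\)'s output; if \(\mathcal A_2\) never stops, neither does \(\mathcal A_1\). This is a legitimate \(\mathcal I_1\) algorithm because \(R\) is known and does not depend on \(h\). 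Each simulated query costs exactly one real query, since both models have the same cost per query.

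The key step is to show that for fixed \(h\), the simulated transcript \((q_1,\widetilde y_1,q_2,\widetilde y_2,\ldots)\) has the same law as the transcript of \(\mathcal A_2\) run against \(\mathcal I_2\). I will prove this by induction on \(k\), the length of the transcript prefix. Assume the laws of the length-\(k\) prefix, jointly with \(\mathcal A_2\)'s internal randomness, agree in the two runs. In both runs, \(\mathcal A_2\)'s next action (stop and output, or choose \(q_{k+1}\)) is the same measurable function of the prefix and its internal randomness. Given the prefix and \(q_{k+1}=q\), memorylessness says the real response under \(\mathcal I_1\) has law \(P_1(\cdot\mid h,q)\), independent of the past. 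The post-processing is fresh, so \(\widetilde y_{k+1}\) has conditional law \(\sum_y R(\cdot\mid y,q)P_1(y\mid h,q)=P_2(\cdot\mid h,q)\). This equals the conditional law of the next response in a genuine \(\mathcal I_2\) run, which extends the equality to length \(k+1\). Because the events \(\{T\le q\}\) and \(\{Z\in F(h)\}\) are determined by finite prefixes together with the output, the probabilities in Definition~\ref{def:budget} agree for each \(h\) and each \(q\).

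The main obstacle is the measure-theoretic bookkeeping in this induction. The conditional independence of the new \(\mathcal I_1\) response from the past given \((h,q)\) has to be stated carefully, and so does the independence of the simulation randomness from everything else. When the response and query sets are finite, as in the paper's constructions, this reduces to multiplying conditional probabilities along transcripts, which I will state explicitly. For general spaces, I would argue via the disintegration of the joint transcript law. As a corollary, label feedback simulates binary feedback through the deterministic map \(R(\widetilde y\mid y,q)=\ind\{\widetilde y=\ind\{y=0\}\}\), which recovers the remark in Section~\ref{sec:limits_matched_quantities}.
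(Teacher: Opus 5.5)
Your proposal is correct and follows essentially the same simulation argument as the paper. You run a copy of \(\mathcal A_2\), pass each \(\mathcal I_1\) response through \(R\), show by induction that the simulated transcript has the law of a genuine \(\mathcal I_2\) run, and take the infimum over \(\mathcal A_2\); your explicit treatment of the fresh post-processing randomness and the prefix-measurability of \(\{T\le q,\ Z\in F(h)\}\) only spells out what the paper leaves implicit.
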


\begin{proof}
Fix any recovery algorithm \(\mathcal A_2\) for \(\mathcal I_2\).
An algorithm \(\mathcal A_1\) for \(\mathcal I_1\) runs a copy of
\(\mathcal A_2\), maintaining its simulated transcript.
When this copy chooses query \(q_k\), algorithm \(\mathcal A_1\)
makes the same query and receives \(Y_k\). By the memoryless assumption,
conditional on the simulated history and \(q_k\), this response has
distribution \(P_1(\cdot\mid h,q_k)\).
It then draws \(\widetilde Y_k\) from
\(R(\cdot\mid Y_k,q_k)\) and supplies it to the copy of \(\mathcal A_2\).
The simulation condition makes the conditional distribution of
\(\widetilde Y_k\) equal to \(P_2(\cdot\mid h,q_k)\).
Induction therefore gives the same simulated transcript distribution
as an actual execution of \(\mathcal A_2\) under \(\mathcal I_2\),
for every hidden victim \(h\).
Using the same stopping and output rules and one query per simulated query gives
\[
\Pr_h[T_{\mathcal A_1}\le q,\ Z_{\mathcal A_1}\in F(h)]
=\Pr_h[T_{\mathcal A_2}\le q,\ Z_{\mathcal A_2}\in F(h)]
\]
for every \(h\) and \(q\in\mathbb N_0\). Hence
\(Q_{\mathcal A_1}(F,\delta)=Q_{\mathcal A_2}(F,\delta)\).
Taking the infimum over \(\mathcal A_2\) proves the claim.
\end{proof}

For label feedback and binary feedback, define $
\rho(y)=\mathbf 1\{y=t\}.
$
Binary feedback is obtained from label feedback by the deterministic
response transformation
$
R(b\mid y)
=
\mathbf 1\{b=\rho(y)\}.
$
Hence label feedback can simulate binary feedback. Proposition
\ref{prop:feedback_simulation} gives
$
Q^\star_{\mathrm{label}}(F,\delta)
\le
Q^\star_{\mathrm{binary}}(F,\delta).
$

Note that the same argument does not apply to FAST and SLOW. Their response
marginals satisfy
$
\Pr_{\mathrm{FAST}}[Y_q=y]
=
\Pr_{\mathrm{SLOW}}[Y_q=y]
$
under a uniformly chosen hidden victim for every fixed query \(q\).
This equality only matches the distribution before conditioning on the
hidden victim and adaptive transcript. It does not provide a
victim-independent response transformation \(R\) satisfying the
simulation condition above. Matched response marginals therefore do
not imply equal optimal recovery budgets.

\section{CIFAR-10 ResNet-18 Training and Audit Protocol}
\label{app:neural_protocol}

\subsection{Data, Model, and Trigger Codebooks}
\label{app:neural_setup}

We use the standard CIFAR-10 training and test sets with class \(0\) as
the target. Training images are augmented by random \(32\times32\)
crops with four-pixel padding and random horizontal flips. The trigger
is applied after augmentation and before normalization. Inputs are
normalized using channel means $(0.4914,\,0.4822,\,0.4465)
$ and standard deviations
$(0.2470,\,0.2435,\,0.2616).
$
The classifier is ResNet-18 with ten output classes. We replace the
initial ImageNet convolution by a \(3\times3\) convolution with stride
one and padding one. The initial max-pooling layer is removed.
Each trigger candidate is a binary \(4\times4\) pattern of Hamming
weight eight. The pattern overwrites the bottom-right \(4\times4\)
region of the image and is replicated across the three color channels.
We use fixed codebooks of size \(H=12,24,48\). Their minimum pairwise
Hamming distances are \(8\), \(6\), and \(6\), respectively.
A common clean base model is trained for \(100\) epochs with batch size
\(128\). Training uses SGD with Nesterov momentum \(0.9\), initial
learning rate \(0.1\), and weight decay \(5\times10^{-4}\). The
learning rate follows a cosine schedule. The same clean base is used
for all FAST and SLOW victims at all three values of \(H\).

\subsection{FAST and SLOW Training}
\label{app:neural_training}

For each hidden victim \(h\), the FAST and SLOW models start from the
same clean base and use the same random stream during fine-tuning.
They therefore receive the same augmented clean examples and the same
sequence of trigger candidates. The label assigned to a triggered
example is the controlled difference between the two models.
At each training step, a minibatch contains \(128\) clean examples.
Each clean example is copied twice to form \(256\) triggered examples.
Trigger candidates are sampled uniformly from the \(H\) candidates.
Clean examples retain their original CIFAR-10 labels. A triggered
example using candidate \(q\) receives label
$O_h^{\mathrm F}(q)
$
for the FAST model and
$
O_h^{\mathrm S}(q)
$
for the SLOW model.

Fine-tuning uses SGD with Nesterov momentum \(0.9\), learning rate
\(0.01\), and weight decay \(5\times10^{-4}\). The learning rate
follows a cosine schedule over the full fine-tuning run. We use
$s_{12}=750,
s_{24}=1500,
s_{48}=3000
$
steps for \(H=12,24,48\). Since each step contains \(256\) triggered
examples, the expected number assigned to any fixed candidate is
$
\frac{256s_H}{H}=16000
$
at every size. Hence increasing \(H\) does not reduce the expected
training exposure of each trigger candidate.

\subsection{Fixed Source and Functional Set Audits}
\label{app:neural_audit}

The fixed source \(x_0\) is selected using only the clean base model.
We choose the first CIFAR-10 test image whose ground-truth label is
not the target and that the clean base classifies correctly. This rule
selects test image \(0\) whose ground-truth label is \(3\). No FAST or
SLOW victim response is used in the selection.

For each trained victim \(h\), define the fixed source response
\[
\widehat O_h(q)=f_h(T_q(x_0)).
\]
We evaluate \(\widehat O_h(q)\) for every victim \(h\) and every
candidate \(q\), so each family contributes \(H^2\) audited
responses. For \(H=12,24,48\), this gives \(144\), \(576\), and
\(2304\) responses per family. Every audited response agrees with the
corresponding prescribed FAST or SLOW response table.

We evaluate functionality on the set
\[
X_{\mathrm{test}}
=
\{x:\text{the ground-truth CIFAR-10 label of }x\neq0\},
\]
which contains \(9000\) test images. For each victim \(h\) and trigger
candidate \(z\), define
\[
\widehat g_h(z)
=
\frac{1}{9000}
\sum_{x\in X_{\mathrm{test}}}
\mathbf 1\{f_h(T_z(x))=0\},
\]
and
\[
\widehat F_{0.9}(h)
=
\{z:\widehat g_h(z)\ge0.9\}.
\]

Membership in \(\widehat F_{0.9}(h)\) is determined exactly. Since the
source set contains \(9000\) images, a candidate is functional if and
only if it produces at least \(8100\) target predictions. Evaluation
of a candidate can therefore stop once either \(8100\) target
predictions have been observed or more than \(900\) non-target
predictions have been observed. The first event certifies membership
in \(\widehat F_{0.9}(h)\); the second certifies non-membership. The
intended candidate \(\phi(h)\) is evaluated on all \(9000\) images so
that its empirical ASR is also obtained exactly.

For every trained victim in both families and at all three values of
\(H\), the audit gives
\[
\widehat F_{0.9}(h)=\{\phi(h)\}.
\]
Thus the learned models reproduce both quantities required by
Proposition~\ref{prop:transfer}: the complete response table on the fixed source and the
functional set under the empirical success criterion. The optimal
zero-error recovery budgets of the finite FAST and SLOW problems
therefore transfer exactly to these learned fixed source recovery
problems.


\section{Additional Experiment}
\label{app:natural-policy}

This experiment compares binary feedback and label feedback for a
specified search policy on backdoored CIFAR-10 classifiers. It
examines the behavior of this policy and does not estimate the optimal
recovery budgets studied in the main text.

\subsection{Experimental Setup}

We use 24 matched sets of CIFAR-10 ResNet-18 classifiers with target
classes in \(\{0,3,7\}\). Each set contains models trained with
BadNet, GRASP, and a BELT-based cover variant, denoted BELT+
\citep{gu2017badnets,zhu2024gradient,qiu2024belt}. Each model is trained with
3,000 poisoned non-target examples whose labels are changed to the
target class. GRASP additionally uses 300 cover examples with their
original labels and Gaussian perturbations on the planted pixels.
BELT+ uses 3,000 cover examples with one of the nine planted pixels
removed while retaining their original labels.
For each trial, we use 32 source images to construct the response
features and 64 source images to evaluate candidate quality. The same
images are used under binary feedback and label feedback. The remaining
8,520 non-target CIFAR-10 test images are used only to evaluate the
candidate returned by the search.

A trigger candidate is a binary mask over an \(8\times8\) region.
Selected pixels are overwritten with white. Evaluating a candidate
requires predictions on the 32 images used for the response features
and the 64 images used for its search score, for a total of 96 queries. The search score is the fraction of the 64 predictions equal
to the target label.

Under label feedback, the predicted class labels on the 32 images are
retained. Under binary feedback, all non-target predictions are
represented by a single value. Therefore, the two feedback settings have exactly the same target predictions for the same candidate and source
images and differ only in the information retained about
non-target predictions.

\subsection{Search Policy}

The search policy is a LinUCB variant \citep{li2010contextual}. Let
\(p_t\) be the fraction of target predictions among the 32 source
images and let \(u\) contain the remaining 10 coordinates of the
response histogram, including the coordinate used for non-target
predictions under binary feedback. We define
\[
\widetilde u=
\begin{cases}
(1-p_t)u/\|u\|_2, & \|u\|_2>0,\\
0, & \|u\|_2=0,
\end{cases}
\]
and let \(r\in\mathbb R^{11}\) denote the vector consisting of
\(p_t\) and \(\widetilde u\). It satisfies
\[
\|r\|_2^2=p_t^2+(1-p_t)^2.
\]
Hence the response vectors have the same norm under the two feedback
settings for the same candidate and source images, while label
feedback can retain the distribution of the non-target labels.

For a candidate \(z\) and pixel coordinate \(j\), let \(d\) be the
fraction of selected pixels in \(z\), let
\(\sigma\in\{-1,+1\}\) indicate removal or addition of pixel \(j\),
and let \((\rho_j,\kappa_j)\) be its row and column coordinates scaled
to \([-1,1]\). With \(e_j\in\mathbb R^{64}\) denoting the coordinate
indicator, the feature vector is
\begin{equation}
x(z,j)=
(1,d,\sigma,\rho_j,\kappa_j,\,
e_j,\,
r,\,
\sigma r,\,
\rho_j r,\,
\kappa_j r)
\in\mathbb R^{113}.
\label{eq:natural-feature}
\end{equation}

The ridge model is initialized with \(A=I\) and \(b=0\). A possible
change with feature vector \(x\) receives the LinUCB score
\[
(A^{-1}b)^\top x
+
0.25\sqrt{x^\top A^{-1}x}.
\]
After the resulting candidate is evaluated, the reward
\(R\in\{-1,0,+1\}\) is the sign of the difference between its search
score and that of its parent candidate. The update is
\[
A\leftarrow A+xx^\top,
\qquad
b\leftarrow b+Rx.
\]

The search starts from 16 distinct random candidates whose numbers of
selected pixels are sampled from
$
\{1,2,4,8,12,16,24,32\}.
$
The two feedback settings use the same initial candidates. Among
evaluated candidates having an unvisited one-bit neighbor, the eight
with the highest search scores are retained. The unvisited one-bit
neighbors of these candidates are then scored by the ridge model and
the candidate with the largest score is evaluated next. A new random
candidate is also evaluated every 128 adaptive steps and whenever no
such neighbor remains.

At each query budget, the policy returns the evaluated candidate with
the highest search score. Its ASR is then measured on the 8,520
reporting images which are not used during the search.

\subsection{Results and Interpretation}
\label{app:natural-results}

Recovery succeeds when the candidate returned by the search policy has
ASR at least \(0.9\) on the 8,520 reporting images. We evaluate seven
query budgets from \(1{,}536\) to \(32{,}768\).
Table~\ref{tab:natural} reports three representative budgets and
Figure~\ref{fig:natural} shows the results at all seven budgets.

\begin{table}[t]
\centering
\caption{
Recovery success of the specified search policy under binary feedback
and label feedback. Success means that the returned candidate has ASR
at least \(0.9\). Each value averages 24 victim sets and five trials.
}
\label{tab:natural}
\begin{tabular}{c cc cc cc}
\toprule
& \multicolumn{2}{c}{BadNet}
& \multicolumn{2}{c}{GRASP}
& \multicolumn{2}{c}{BELT+} \\
\cmidrule(lr){2-3}
\cmidrule(lr){4-5}
\cmidrule(lr){6-7}
Query budget
& Binary & Label
& Binary & Label
& Binary & Label \\
\midrule
1,536  & 0.1083 & 0.1083 & 0      & 0      & 0 & 0 \\
8,192  & 0.3750 & 0.3750 & 0.0167 & 0.0167 & 0 & 0 \\
32,768 & 0.8667 & 0.8917 & 0.0833 & 0.0750 & 0 & 0 \\
\bottomrule
\end{tabular}
\end{table}

Under this policy, BadNet recovery becomes substantially more frequent
as the query budget increases under both feedback settings. The two
settings have the same success rate of \(0.3750\) at a budget of
\(8{,}192\). At the largest budget, the success rates are \(0.8667\)
under binary feedback and \(0.8917\) under label feedback.
Recovery is much less frequent for GRASP under the same policy. At the
largest budget, the success rates are \(0.0833\) under binary feedback
and \(0.0750\) under label feedback. For BELT+, no trial returns a
candidate with ASR at least \(0.9\) under either feedback setting at
any of the evaluated budgets.

\begin{figure}[t]
\centering
\includegraphics[width=.96\linewidth]{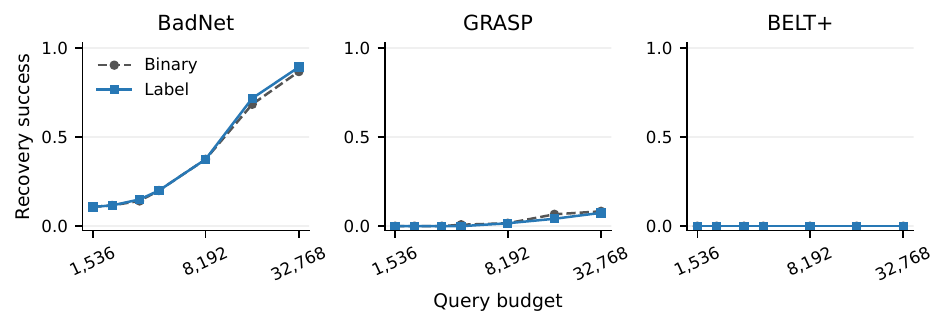}
\caption{
Recovery success at each query budget averaged over 24 victim sets
and five trials. Binary feedback and label feedback use the same
16 initial candidates. Later candidate choices can differ because
label feedback retains the non-target class labels used by the search
policy.
}
\label{fig:natural}
\end{figure}

Across the three backdoor methods, retaining the non-target class
labels does not produce a consistent change in recovery success for
this search policy. Label feedback gives slightly higher success for
BadNet at the largest budgets, slightly lower success for GRASP at the
largest budget, and the same zero success for BELT+.
This observation is consistent with
Proposition~\ref{prop:feedback_simulation}. An optimal algorithm with
label feedback can discard the additional class information and
simulate an algorithm using binary feedback, which implies
$
Q^\star_{\mathrm{label}}(F,\delta)
\le
Q^\star_{\mathrm{binary}}(F,\delta).
$
Note that the experiment above instead evaluates one specified search policy.
Although this policy uses the non-target class labels in its features,
the additional information need not improve its subsequent candidate
choices. The observed success rates describe this policy and
do not estimate the optimal recovery budget \(Q^\star\).

The same victim models were used during development of the search
policy and checkpoint selection used measurements from the CIFAR-10
test set. Therefore, the results should be interpreted as descriptive
results for these models rather than as an independent evaluation.

\end{document}